\documentclass[12pt]{article}
\usepackage{fullpage} % Sets margins to 1 inch. Remove when using different template

\usepackage{amsmath, amsfonts, amsthm}

\usepackage{graphicx}
\usepackage{xcolor}
\usepackage{caption}
\usepackage{subcaption} % Used to give captions to subfigures 

\usepackage{algorithm}
\usepackage{algpseudocode}
\usepackage{enumitem}
\usepackage{xspace} % Used in D2FA command to fix space after command

\usepackage[nocompress]{cite}

\usepackage{hyperref}
\usepackage{cleveref}

\newtheorem{theorem}{Theorem}
\newtheorem{lemma}{Lemma}
\theoremstyle{definition}

\newtheorem{corollary}{Corollary}[lemma]

\newcommand{\DdFA}{\ensuremath{\mathrm{D^2FA}}\xspace}
\newcommand{\CursiveDdFA}{$D^2\!\text{\itshape FA}$\xspace}

\usepackage{enumitem}

\newcommand{\dasheddivider}{
  \par\vspace{10pt}\noindent
  \cleaders\hbox to 6pt{\hss -\hss}\hfill\vbox{}
  \par\vspace{10pt}
}

\newcommand{\vv}[2]{\ensuremath{v_{#1,#2}}} % overskriver kommandoen for et "v" over et bogstav
\newcommand{\vin}[1]{\vv{#1}{\text{in}}}
\newcommand{\vout}[1]{\vv{#1}{\text{out}}}
\newcommand{\tstate}{sink-state\xspace}
\newcommand{\tstates}{sink-states\xspace}
\newcommand{\vijstate}{set-state\xspace}
\newcommand{\vijstates}{set-states\xspace}
\newcommand{\inoutstate}{boundary-state\xspace}
\newcommand{\inoutstates}{boundary-states\xspace}
\title{Hardness and Approximation for Compressing Deterministic Finite Automata}
\author{Philip Bille\\\texttt{phbi@dtu.dk} \and Inge Li Gørtz\\\texttt{inge@dtu.dk} \and Christoffer Krogh\\\texttt{chrkro@dtu.dk} \and Christian Møller Mikkelstrup\\\texttt{cmomi@dtu.dk}}
\date{}

\begin{document}
\maketitle

\begin{abstract}
    The \emph{delayed deterministic finite automaton} ($\text{D}^2\text{FA}$) is a powerful tool for compressing the prohibitively large \emph{deterministic finite automata} (DFAs) used in modern signature-based deep packet inspection systems. A $\text{D}^2\text{FA}$ augments a DFA with unlabeled default transitions to remove transitions that are similar. This technique works well in practice and has inspired numerous subsequent studies. However, existing approaches do not guarantee a $\text{D}^2\text{FA}$ of optimal size as they restrict the default transitions to form a forest.

    In this paper, we consider the problem of finding an optimal $\text{D}^2\text{FA}$. We make the key observation that cycles of default transitions can be valid and produce a smaller $\text{D}^2\text{FA}$, provided that for every character in the alphabet, at least one state in the cycle has an outgoing transition labeled by that character. This observation raises fundamental questions concerning the computational complexity of finding an optimal $\text{D}^2\text{FA}$ and the approximation guarantee of the standard (forest-restricted) approach.
    
    We answer both questions by showing that it is NP-hard to find an optimal $\text{D}^2\text{FA}$ and by proving that the standard approach provides a $3/2$-approximation to the optimum. To establish NP-hardness, we construct a reduction from set cover using a DFA with gadgets where valid default transition choices correspond to selecting sets. To prove our approximation ratio, we compare the cycles in an optimal solution to a variant where the default transitions form a forest. Finally, we demonstrate that this $3/2$ ratio is tight via a constructed family of DFAs.
\end{abstract}
\newpage
\section{Introduction}
Signature-based deep packet inspection is a fundamental technique in modern network-security applications for identifying malicious content, sensitive information, and security-policy violations in network traffic. This technique relies on a collection of predefined patterns, represented as regular expressions, that are matched against an input (or data stream). The natural approach to this pattern matching is to use the \textit{deterministic finite automaton} (DFA) to represent the patterns and then simulate it over the input. This DFA-based approach to matching is used in the Snort~\cite{roesch1999a}, Zeek~\cite{paxson1999a}, and Suricata~\cite{suricata} intrusion detection systems (see also \href{https://snort.org/}{\texttt{snort.org}}, \href{https://zeek.org/}{\texttt{zeek.org}}, and \href{https://suricata.io/}{\texttt{suricata.io}}). A DFA consists of states and labeled transitions between states. Every state has an outgoing transition for each character in an alphabet.

Unfortunately, DFAs for the collections of regular expressions used in modern intrusion detection systems are prohibitively large and not feasible for practical implementation. To overcome the space issue, significant work has been done on compressing DFAs~\cite{6211894,ANTONELLO2015103,4215710,becchi2007a,1477950,becchi2013a,1635952,1452339,5643205,gong2023a,1460879,IntroOfD2FA,kumar2006a,liu2014a,liu2017a,meiners2010a,patel2014a,5935021,1354682,FastDFACompression}. One elegant and powerful approach is the \emph{delayed} deterministic finite automaton (\DdFA) introduced by Kumar, Dharmapurikar, Yu, Crowley, and Turner~\cite{IntroOfD2FA} and applied in many subsequent works~\cite{becchi2007a,becchi2013a,gong2023a,kumar2006a,liu2014a,liu2017a,matousek2018a,matousek2018b,meiners2010a,patel2014a,FastDFACompression}. Although this approach performs remarkably well in practice, we still know very little about its theoretical properties and performance guarantees.
% This technique leverages the observation that many states in the DFAs for real-world regular expression collections have similar sets of outgoing transitions, in many cases accounting for more than 95\% of transitions~\cite{IntroOfD2FA}.

The key idea behind the \DdFA is to leverage the observation that many states in the DFAs for real-world regular expression collections have similar sets of outgoing transitions~\cite{IntroOfD2FA}. Specifically, a \DdFA is a DFA where each state can additionally have an outgoing unlabeled transition, called a \emph{default transition}. A default transition from state $u$ to state $v$ can reduce the number of transitions in a \DdFA if $u$ and $v$ have a set of transitions with matching labels and target states. In this case, we can remove these transitions from $u$ and look for them at $v$ after following the default transition. Finding a transition with the desired label may require following a chain of default transitions, thus incurring a delay (as the name suggests). Two \DdFA{s} over the same set of states are equivalent if, starting from any given state and reading any character, both automata reach the same next state. Likewise, a \DdFA and a DFA are equivalent if they satisfy this same condition. The size of a \DdFA (or DFA) is the number of transitions and default transitions (where a DFA has zero default transitions). Among all equivalent \DdFA{s}, those of minimum size are considered optimal. The \emph{\CursiveDdFA compression problem} asks for an optimal (i.e. minimum-size) \DdFA equivalent to a given input DFA.

% To solve the \DdFA compression problem, all previous works determine the \DdFA using a variant of the following approach: construct a weighted complete graph over the DFA states—commonly called the \emph{space reduction graph} (SRG)—where each edge represents a candidate default transition weighted by the number of removable transitions. Then, compute a maximum spanning forest on the SRG. We use \emph{MSF compression} to denote any \DdFA that maximizes size reduction under the constraint that default transitions form a forest. However, despite the many practical studies of \DdFA{s}, there are, to the best of our knowledge, no theoretical studies of the relationship between MSF compression and an optimal unrestricted \DdFA.

To address the \DdFA compression problem, previous work uses a weighted complete graph over the DFA states, known as the \emph{space reduction graph} (SRG). In this graph, each edge represents a candidate default transition and is weighted by the number of transitions that can be removed after adding this candidate.
% a possible default transition weighted by the number of original transitions it removes. 
Existing algorithms compute a maximum spanning forest on the SRG and use the corresponding default transitions to compress the input DFA. We refer to any \DdFA that maximizes size reduction under this forest constraint as an \emph{MSF compression}.

% When building a \DdFA, we must ensure that following a chain of default transitions eventually leads to a transition labeled by the desired character. To the best of our knowledge, every previous approach ensures this by avoiding cycles of default transitions. We use \emph{MSF compression} to denote any \DdFA that maximizes size reduction under the constraint that default transitions form a forest. One standard way to obtain an MSF compression is to build a maximum spanning forest on a weighted complete graph over the DFA states. In this complete graph, each edge represents a candidate default transition between the two states and is weighted by the number of transitions that can be removed after adding this candidate. However, despite the many practical studies of the \DdFA{s}, there are, to the best of our knowledge, no theoretical studies of the relationship between the MSF compression and an optimal unrestricted \DdFA. 

We highlight the important observation that the default transitions of a \DdFA need not be restricted to form a forest. In fact, cycles of default transitions are completely valid as long as, for every character in the alphabet, at least one state in each cycle has an outgoing transition labeled by that character. 
% Importantly, cycles of default transitions are valid as long as, for every character in the alphabet, at least one state in each cycle has an outgoing transition labeled by that character. 
As demonstrated in \Cref{fig:allowingcyclesinDFA}, lifting the forest restriction allows us to construct an optimal \DdFA that is smaller than the MSF compression. 
% As seen in \Cref{fig:allowingcyclesinDFA}, we can use this property to obtain an optimal \DdFA that is smaller than the MSF compression. 
This observation gives rise to two natural questions:
\begin{itemize}
    \item What is the complexity of computing an optimal \DdFA?
    \item How far from the optimal \DdFA is the MSF compression?
\end{itemize}

\subparagraph*{Our contributions.}
In this paper, we answer both of these questions. We establish that computing an optimal \DdFA is NP-hard via a polynomial-time reduction from the set cover problem, and we prove that the MSF compression approach is a $3/2$-approximation to the problem. Formally, our main results are as follows: 

\begin{theorem}\label{thm:set_cover_reduction}
    The \CursiveDdFA compression problem is NP-hard.
\end{theorem}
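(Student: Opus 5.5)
We reduce from the decision version of set cover: given a universe $U=\{1,\dots,n\}$, a family $S_1,\dots,S_m\subseteq U$, and an integer $k$, decide whether some $k$ sets cover $U$. From such an instance we build in polynomial time a DFA $A$ and a threshold $T$ such that $A$ has an equivalent \DdFA of size at most $T$ if and only if a cover of size $k$ exists. The guiding principle is the observation above: a cycle of default transitions is valid only if, for every character, some state on the cycle keeps an outgoing transition on that character. A cycle that saves many transitions must therefore keep a few ``witness'' transitions, and we want those witnesses to play the role of chosen sets.

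\textbf{Construction.} Take an alphabet with one character $c_j$ for each element $j\in U$, plus padding characters. Introduce a \tstate $t$ that serves as the common target of most transitions. For each set $S_i$ introduce a \vijstate $v_i$ whose transitions agree with those of a designated hub/\inoutstate $\vin{}$ on every character except the $c_j$ with $j\in S_i$; on those characters $v_i$ deviates in a controlled way. Boundary states $\vin{}$ and $\vout{}$ are arranged so that the cheapest configuration is one large default cycle through the hub and through some of the set-states. The padding characters, with many shared transitions to $t$, make every state's default transition extremely valuable. As a result, in any near-optimal solution every state has a default transition, and default transitions point only along the intended candidate edges.

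Validity of a cycle then forces: for each $c_j$, some state on the cycle retains its $c_j$-transition. We arrange that only set-states $v_i$ with $j\in S_i$ can retain $c_j$ cheaply, and that including $v_i$ as a ``retaining'' state on the cycle costs exactly one extra unit compared with routing $v_i$ through the forest-like default edges. The size of the \DdFA is then a fixed base cost plus the number of set-states that act as witnesses, so $T$ is set to base $+k$.

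\textbf{Correctness.} The forward direction is direct: given a cover $\mathcal{C}$ with $|\mathcal{C}|=k$, place the states $v_i$ with $i\in\mathcal{C}$ on the cycle and attach all other states to it by defaults. Every character is then witnessed, so the result is equivalent to $A$ and has size $T$.

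For the converse, take any equivalent \DdFA of size at most $T$. First we normalize it through exchange arguments: a missing default transition, a default edge outside the intended candidates, or redundant kept transitions can each be replaced without increasing size. In the normalized form we identify the witnessing set-states and read off a cover of size at most $k$.

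\textbf{Main obstacle.} The hardest step is this normalization. An arbitrary optimal \DdFA may use several default cycles, unintended default targets, or chains that are equivalent but structured differently. The weights, i.e.\ the multiplicities of the padding characters, must make every deviation strictly costlier, which needs a careful counting lemma per state type. I would first prove that every state has a default transition and that every default cycle must contain the hub. Once that holds, the cycle's witnesses correspond to a cover.
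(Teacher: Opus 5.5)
Your guiding principle, that a default cycle is valid only if its states jointly witness every character so that witnesses can play the role of chosen sets, is the right one. However, you never turn it into a construction. The accounting everything rests on is asserted rather than realized: that each set-state on the cycle costs exactly one extra unit, so the size is base${}+{}$number of witnesses. In the design you sketch (one hub $v_{\text{in}}$, one $v_{\text{out}}$, the sink $t$), it fails in three ways.

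\emph{Costs depend on the sets, not their number.} A simple cycle visits each of these three states at most once. So a cycle carrying more than three set-states contains defaults $v_a\to v_b$ between set-states. Such a default must keep every $c_j$ with $j\in S_a\triangle S_b$, and all $c_j$ with $j\in S_a\cup S_b$ if the deviations use different targets. Likewise, a hub default to $v_a$ keeps $|S_a|$ transitions. The cost of a cycle therefore depends on the sizes, overlaps and order of the chosen sets, not on how many there are.

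\emph{Retention cannot be restricted.} ``Only $v_i$ with $j\in S_i$ can retain $c_j$ cheaply'' cannot be arranged. Keeping a transition costs exactly one unit at any state, so a cycle lacking a witness for $c_j$ can simply keep the hub's $c_j$-transition. That is the same one unit you charge for a set. With $T=\text{base}+k$, a configuration with $\ell$ sets and $k-\ell$ uncovered elements passes the threshold, which breaks the ``only if'' direction.

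\emph{The converse is not done.} Padding shared by all states does not force every state to take a default: each default cycle must keep every such character once, exactly as a forest root does. More importantly, the normalization you flag as the main obstacle is never carried out. With $k$ units of slack in $T$, non-tight configurations genuinely fit under the threshold, so that exchange argument is most of the proof.

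The missing idea is to make the threshold tight, so that no normalization is needed and the number of sets is bounded structurally rather than by cost. The paper uses $k$ copies of a gadget, each consisting of $\vin{i}$, all $m$ set-states, and $\vout{i}$. The copies are joined in a ring by \$-transitions $\vout{i}\to\vin{i+1}$. It adds control characters $C^{\text{in}}_j, C^{\text{out}}_j$ of size $|S_j|+1$, with matching self-loops at the targets of those transitions. This cancels the dependence on $|S_j|$: two states joined by a red transition agree on exactly $u+1$ characters, every other pair agrees on at most $u$, and $t$ agrees with no state. A reduction of $(m+2)ku$ then forces every non-sink state to have a default along a red transition that keeps only its disagreements, leaving no room for any extra kept transition. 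Any default cycle passes through at most one set-state per gadget, or is a $4$-cycle inside one gadget (using $k\ge 2$). The default cycle property applied to that cycle says directly that its set-states cover $U$. Conversely, a cover $S_{z_1},\dots,S_{z_k}$ yields the cycle $\vin{1},\vv{1}{z_1},\vout{1},\dots,\vin{k},\vv{k}{z_k},\vout{k}$, with every set-state defaulting to its gadget's $\vout{i}$.
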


\begin{theorem}\label{thm:apxRatio}
    The MSF compression is a $3/2$-approximation for the \CursiveDdFA compression problem, and this ratio is tight.
\end{theorem}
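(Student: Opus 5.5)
The plan is to compare an optimal \DdFA $D^*$ with a forest-shaped \DdFA that we build from $D^*$ by breaking its default cycles. Since the MSF compression is optimal among all forest-shaped \DdFA{s}, it is at least as small as the one we build. For the approximation bound it therefore suffices to show that the forest we build has size at most $\tfrac32|D^*|$. I have not yet verified that the charging argument below actually reaches the constant $\tfrac32$; that is the open step.

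\textbf{Structure of $D^*$.} Each state has at most one default transition, so the default transitions form a functional graph. Each weakly connected component is therefore either a tree or a single cycle with trees hanging off it.

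\textbf{Local observation.} Let $u\to v$ be a default transition and let $u$ drop character $c$. Then $v$ must resolve $c$ to $\delta(u,c)$. Moreover, $v$ resolves $c$ through the same chain it would use itself, so $\delta(v,c)=\delta(u,c)$. Hence $u$ drops only characters on which $u$ and $v$ agree, and the saving of the edge is at most its SRG weight minus one.

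\textbf{Structure of a cycle.} Let $C=(x_1,\dots,x_k)$ be a default cycle in $D^*$, and let $k_i$ be the number of transitions kept at $x_i$. Validity of the cycle gives $\sum_i k_i\ge\sigma$, where $\sigma$ is the alphabet size. So $C$ contributes at least $\sigma+k$ to $|D^*|$: its kept transitions plus its $k$ default transitions.

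Call the first state after $x_i$ along $C$ that keeps $c$ the \emph{keeper} of $c$ for $x_i$. If $x_i$ does not keep $c$, then $\delta(x_i,c)$ equals $\delta$ at that keeper. In particular, every character kept at exactly one state of $C$ has the same transition at all states of $C$.

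\textbf{Breaking a cycle.} I would break each cycle by choosing a root $r\in C$, which then keeps all $\sigma$ transitions. I would then re-point the other cycle states so that the defaults form a tree, using one of two options:
\begin{itemize}
\item keep the path $x_{i}\to x_{i+1}$, so only $r$'s cost changes, by $\sigma-k_r-1$;
\item point every other state of $C$ directly at $r$ (a star), so each state keeps only the characters on which it differs from $r$.
\end{itemize}
Trees hanging off the cycle keep their default transitions unchanged. Their cost also stays unchanged, since their targets still resolve every character correctly. Thus all the extra cost is local to the cycle.

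\textbf{Charging.} Choosing $r$ with $k_r$ maximal, the path option costs extra $\sigma-\max_i k_i-1$. The star option costs roughly the number of characters kept at two or more states. I would combine the two bounds via the keeper structure to show
\[
\text{extra cost}\le \tfrac12\bigl(\textstyle\sum_i k_i+k\bigr).
\]
Summing over cycles then gives size at most $\tfrac32|D^*|$.

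\textbf{Main obstacle.} The charging step is the hard part. The naive path bound alone only yields a ratio of $2$ for long cycles with all $k_i$ small. The star bound is needed precisely in that regime: when few characters are kept per state, most characters have a single keeper, so the states are nearly identical and a star is cheap. I expect the final inequality to follow from a case split on whether $\max_i k_i\ge\sigma/2$, possibly with a more refined count of characters by their number of keepers.

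\textbf{Tightness.} I would construct a family of DFAs with a single default cycle. Cycles of length $2$ force identical states and give no gap, so I would try length $3$. The states $x_1,x_2,x_3$ would be arranged so that each pair shares about half of the alphabet, and each character is kept at two of the three states. The optimum then uses the $3$-cycle at cost about $\sigma+3$. I would then verify via the SRG weights that every spanning forest costs about $\tfrac32\sigma$. If a single cycle is not enough, I would replicate the gadget and let $\sigma\to\infty$, so that the ratio approaches $3/2$.
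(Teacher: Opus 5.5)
Your framework matches the paper's: break each default cycle of an optimal \DdFA by deleting one default transition, and use that the hanging trees are unaffected and the MSF compression is optimal among forests. Your target inequality (overhead at most half the cycle's cost) is also true. The gap is that the step you leave open is the entire upper bound, and the tool you propose for it fails, because the star is not cheap. Suppose a character is kept at two diametrically opposite states of a cycle of length $n'$ with different targets. Then the cycle splits into two arcs of $n'/2$ states that disagree on it, so for any root, $n'/2$ non-root states must keep it. Now take $s$ such characters plus about $sn'/2$ characters each kept at a single state, spread evenly around the cycle. On this valid cycle both the star and your plain path have overhead approaching the cycle's own cost as $n'$ grows, so no case split between them proves your inequality. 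The missing idea is the observation you already made but did not exploit. A character kept at exactly one cycle state is agreed on by all cycle states. So once the root $r$ stores all $\sigma$ transitions, the keeper of each such character can delete it, since its default chain along the path ends at $r$. The overhead of this refined path is $\sigma_{\ge 2}-k'_r-1$, where $\sigma_{\ge 2}$ is the number of characters kept at two or more cycle states and $k'_r$ is how many of those $r$ keeps. Since $\sum_i k_i\ge 2\sigma_{\ge 2}$, this is below $\tfrac12\sum_i k_i$ for every choice of $r$, and $\tfrac32|D^*|$ follows with no star at all. This is the paper's route. \Cref{lma:worstCaseStructure} discards such once-kept characters (they cost one transition on both sides, which in the forest is exactly the deletion just described) to assume the cycle keeps at least $2\sigma$ transitions. \Cref{lma:componentRatio} then bounds the overhead of removing the cheapest default transition by averaging.

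Your tightness plan cannot work. Choose $r$ to maximize $k'_r$ in the bound above. This gives overhead at most $\sigma_{\ge 2}(1-2/n')$ against a cycle cost of at least $2\sigma_{\ge 2}+n'$, so a default cycle of length $n'$ yields ratio at most $\tfrac32-\tfrac1{n'}$. Hence if the optimum has only $3$-cycles, the MSF compression is within $\tfrac76$. Disjoint copies do not help, because the overall ratio is at most the largest per-component ratio. Your numbers are also inconsistent. With every character kept at two of three states the cycle costs $2\sigma+3$, not $\sigma+3$, and a forest of cost $\tfrac32\sigma$ would beat it. If the cycle really cost $\sigma+3$, every character would be kept once, all three states would agree everywhere, and a path would cost $\sigma+2$. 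The gap requires cycle length $n\to\infty$ together with $\sigma/n\to\infty$. The paper splits $\Sigma$ into blocks $\Sigma_1,\dots,\Sigma_n$ of size $k$. On $\Sigma_i$ every state goes to $v_{i+1}$ except $v_i$, which self-loops. Any two states then agree on exactly $(n-2)k$ characters, and the cycle $v_i\to v_{i-1}$ gives size $2\sigma+n$. Every forest has at most $n-1$ default transitions, each saving at most $(n-2)k-1$, so the ratio is $\tfrac32-\tfrac1{4k+2}-\tfrac1n$.
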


\subparagraph*{Techniques.}
As every state in a \DdFA has at most one outgoing default transition, the default transitions of an unrestricted \DdFA form a pseudoforest (a graph where each connected component contains at most one cycle). This structural property might intuitively suggest that an optimal \DdFA can be obtained by simply augmenting an MSF compression with additional default transitions. However, not every cycle of default transitions is valid. Specifically, cycles of default transitions are only valid when, for every character in the alphabet, at least one state in the cycle has an outgoing transition labeled by that character. Thus, adding the default transition that completes a cycle of default transitions may not allow for the removal of as many transitions as expected. Because of this, the optimal placement of default transitions may differ completely from the placement in the MSF compression if it allows for better cycles of default transitions.

To capture this difficulty, we design a polynomial-time reduction from the set cover problem. The core challenge is to construct an input DFA such that any equivalent optimal \DdFA directly corresponds to a solution to this set cover instance. We achieve this by creating a DFA with as many gadgets as the number of sets allowed in the set cover instance. We design these gadgets such that the choice of default transitions within a gadget corresponds to the choice of a set in the set cover instance. The key difficulty in the reduction is to ensure that the default transitions of any optimal \DdFA form a cycle going through the gadgets. By carefully placing additional transitions in the input DFA, we guarantee that maximal size reduction is attained if and only if the default transitions form a cycle that encodes a valid set cover. For a \emph{no}-instance of set cover, the default transitions may look completely different. However, in this case they allow for less reduction in size. As a result, we prove that it is at least as difficult to solve the \DdFA compression problem as the set cover problem, thereby classifying it as NP-hard.

Having established the intractability of the general problem, we next analyze the approximation guarantee of the MSF compression relative to an optimal unrestricted \DdFA. However, as we discussed, the configuration of default transitions in the two \DdFA{s} can differ substantially, complicating any direct comparison. Instead, we consider an intermediate forest-restricted \DdFA derived from the optimal \DdFA. Because the MSF compression is optimal over all forest-restricted \DdFA{s}, the size of this intermediate \DdFA provides an upper bound on the size of the MSF compression.

Although the default transitions of an optimal \DdFA may form multiple connected components, we bound the approximation ratio by analyzing the single component where the intermediate \DdFA performs worst relative to the optimum. For this component, we can characterize its structure and which transitions can be removed by default transitions. Bounding the approximation ratio on this worst-case component yields an upper bound of $3/2$ for the intermediate \DdFA, thereby establishing a $3/2$-approximation ratio for the MSF compression.

Finally, to prove that this bound is tight, we construct a family of DFAs for which the optimal \DdFA features a cycle of default transitions and two transitions for each character in the alphabet. For this family, the size of the MSF compression approaches $3/2$ times the optimum.

\subparagraph*{Outline.}
We present the problem and the corresponding prerequisites in \Cref{sec:prelim} and \Cref{sec:D2FA}. We prove \Cref{thm:set_cover_reduction} in \Cref{sec:Hardness} and \Cref{thm:apxRatio} in \Cref{sec:ACAApx}.

\begin{figure}[t]
    \centering
    \includegraphics[clip, trim=7.55cm 18.25cm 6.8cm 2.85cm, width=0.3\textwidth]{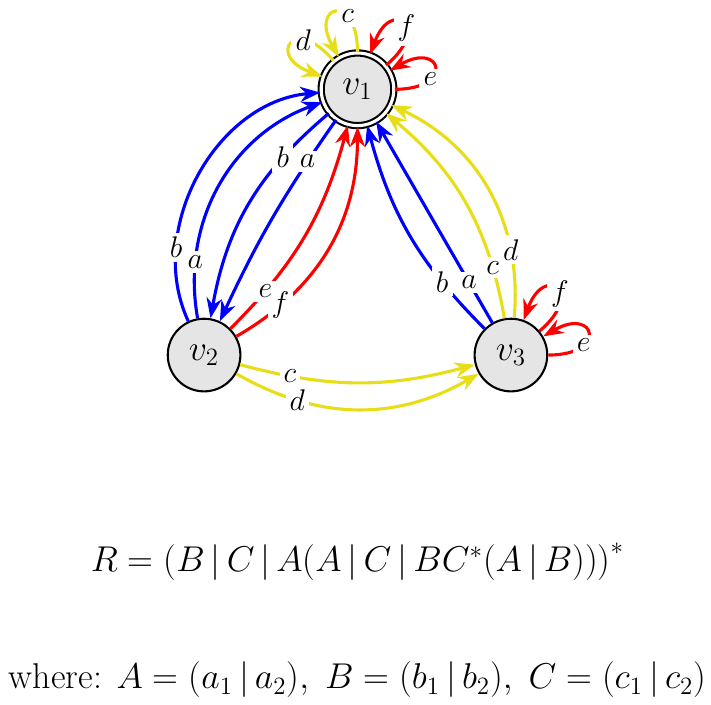}
    \hfill
    \includegraphics[clip, trim=7.55cm 18.25cm 6.8cm 2.85cm, width=0.3\textwidth]{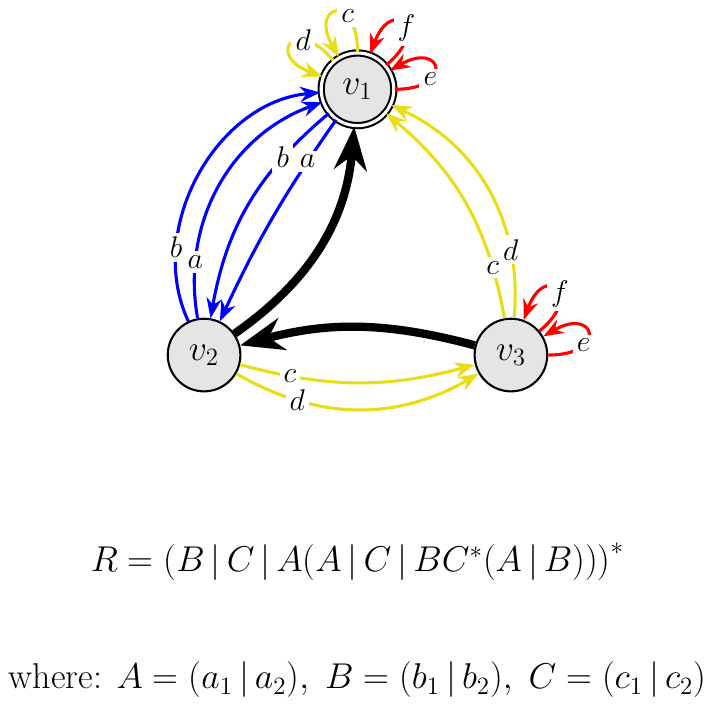}
    \hfill
    \includegraphics[clip, trim=7.55cm 18.25cm 6.8cm 2.85cm, width=0.3\textwidth]{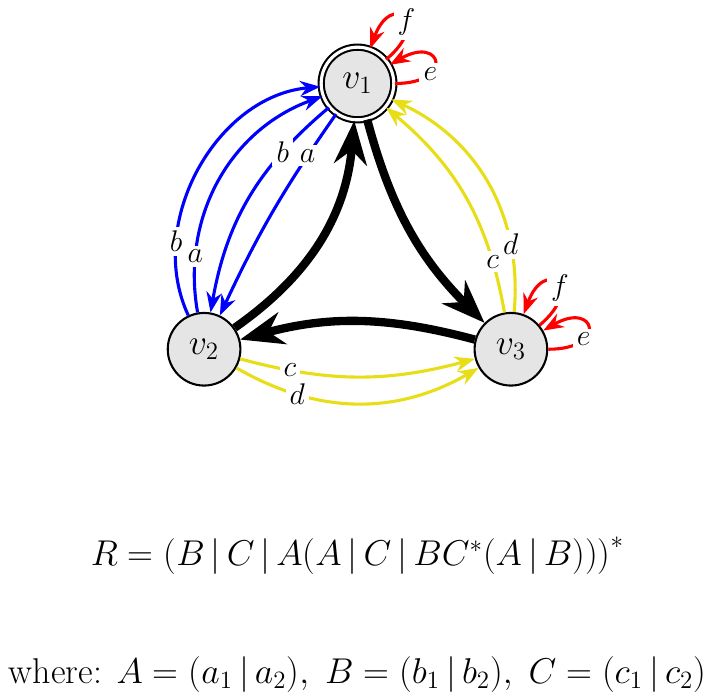}
    \caption{(Left) A DFA for the regular expression $\texttt{(C|E|A(A|E|CE}^\ast\texttt{(A|C)))}^\ast$, where $\texttt{A=(a|b)}$, $\texttt{C=(c|d)}$, and $\texttt{E=(e|f)}$ are colored blue, yellow, and red, respectively. (Middle) An equivalent MSF compression. (Right) A minimum equivalent \DdFA. This example shows that the optimal \DdFA might have cycles. In this case, the optimum has a reduction of 1 transition compared to the MSF compression.}
    \label{fig:allowingcyclesinDFA}
\end{figure}

\section{Preliminaries}\label{sec:prelim}
A \emph{deterministic finite automaton} (DFA) is a directed graph where vertices represent states and directed, labeled edges represent transitions. Formally, a DFA is defined as a $5$-tuple $D=(V, E, \Sigma, \theta, \Phi)$, where $V$ is the set of states, $E \subseteq V \times V \times \Sigma$ is the set of transitions, $\Sigma$ is the alphabet, $\theta \in V$ is the start state, and $\Phi \subseteq V$ is the set of accepting states. 

We say that the $3$-tuple $(u,v,c)\in E$ is a transition from state $u\in V$ to target state $v\in V$ with label $c\in \Sigma$. Each DFA $D$ has a corresponding \emph{transition function}, $\delta_D : V \times \Sigma \rightarrow V$, where $\delta_D(u, c) = v$ if and only if $(u, v, c) \in E$. We assume that each state in $D$ has exactly one outgoing transition for every character in $\Sigma$, ensuring that the transition function $\delta_D$ is defined on all inputs. The \emph{size} of a DFA is the number of transitions.

\section{Delayed Deterministic Finite Automata}\label{sec:D2FA}

A \emph{delayed deterministic finite automaton} (\DdFA) is a DFA that is augmented with unlabeled \emph{default transitions}. Formally, a \DdFA is a $6$-tuple $D=(V, E, \Sigma, \theta, \Phi, F)$, where $V, \Sigma, \theta,$ and $\Phi$ are defined exactly as in a standard DFA. The set $E \subseteq V \times V \times \Sigma$ contains transitions defined as in the standard DFA except that a state $u \in V$ need not have an outgoing transition for every character in $\Sigma$. The final component $F \subseteq V \times V$ is a set of default transitions. The default transitions are unlabeled and each state has at most one outgoing default transition. Each state must have a transition for every character in $\Sigma$ or a default transition to a different state. The transition function $\delta_{D}: V \times \Sigma \rightarrow V$ is defined recursively as:
\[
    \delta_{D}(u, c) =
    \begin{cases}
        v & \text{if } (u, v, c) \in E\; , \\
        \delta_{D}(v, c) & \text{if } (u, v, c) \not\in E \text{ and } (u, v) \in F\; .
    \end{cases}
\]
We require that $\delta_{D}$ is well-defined, that is $\delta_{D}(u, c)$ ends in a unique state for every $u \in V$ and $c\in \Sigma$. This implies that a \DdFA has the following property:

\begin{description}
    \item[Default cycle property.] Any cycle of default transitions has at least one outgoing transition for every character in $\Sigma$. That is, for every character $c\in \Sigma$ there is a transition $(u,v,c)\in E$, where $u$ is a state in the cycle.
\end{description}

We say that two \DdFA{}s $D= (V, E, \Sigma, \theta, \Phi, F)$ and $D' = (V', E', \Sigma', \theta', \Phi', F')$ are \emph{equivalent} if $V=V'$, $\Sigma=\Sigma'$, $\theta = \theta'$, $\Phi= \Phi'$ and 
$\delta_{D}(u, c) = \delta_{D'}(u, c)$ for all $u$ and $c$. 
The \emph{size} of a \DdFA is the number of transitions and default transitions, $|E| + |F|$. A  \DdFA is \emph{minimum} if there is no equivalent \DdFA of smaller size.

The \DdFA is a generalization of the DFA, and every DFA has a natural corresponding \DdFA  where the set of default transitions is empty. 
We say that a \DdFA is equivalent with the DFA $D = (V, E, \Sigma, \theta, \Phi)$ if it is equivalent with the corresponding \DdFA $D^2=(V, E, \Sigma, \theta, \Phi,\emptyset)$.

Given a DFA, the \emph{\CursiveDdFA compression problem} is to compute an equivalent minimum \DdFA.

\subparagraph*{Default transition graph.} The \textit{default transition graph} (DTG) of a \DdFA $D = (V,\allowbreak E,\allowbreak \Sigma,\allowbreak \theta,\allowbreak \Phi,\allowbreak F)$ is the directed subgraph $(V, F)$ containing the states and the default transitions.  Since each state in a default transition graph has at most one outgoing edge, any default transition graph is a directed pseudo-forest, that is, every weakly connected component contains at most one directed cycle.
A cycle in the DTG is referred to as a \textit{default cycle} in the \DdFA. We prove the following property. 
\begin{lemma}\label{lma:3-cycles}
    All default cycles in a minimum \CursiveDdFA  has length at least 3. 
\end{lemma}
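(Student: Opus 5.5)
Default cycles of length 1 cannot occur, because the definition requires a default transition to go to a \emph{different} state. So it suffices to show that a minimum \DdFA has no default cycle of length 2. We argue by contradiction: take a minimum \DdFA $D=(V,E,\Sigma,\theta,\Phi,F)$ containing a 2-cycle $u \to v \to u$. We will build an equivalent \DdFA that is no larger and has no such cycle, and then upgrade this to a strict size decrease.

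\textbf{Construction.} Delete the default transition $(v,u)$ from $F$. To preserve the transition function at $v$, add to $E$ the transition $(v,\delta_D(v,c),c)$ for every $c\in\Sigma$ for which $v$ has no outgoing transition. Call these the characters \emph{missing} at $v$.

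\textbf{Equivalence.} Consider a character $c$ missing at $v$. In $D$ we have $\delta_D(v,c)=\delta_D(u,c)$. By the default cycle property, $u$ must then have a transition labeled $c$, so $\delta_D(v,c)$ is the target of an explicit edge at $u$. After the change, $v$ has explicit transitions for all characters, so $\delta(v,\cdot)$ is unchanged. Every other state reaches $v$ or $u$ along default chains, and $u$ still defaults to $v$, so all values of $\delta$ are preserved. The new default graph is still a pseudoforest, and the cycle is broken, so the default cycle property holds trivially for it.

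\textbf{Size accounting.} The size changes by $m_v-1$, where $m_v$ is the number of characters missing at $v$. Symmetrically, breaking the cycle at $u$ instead changes the size by $m_u-1$. The sets of missing characters at $u$ and at $v$ are disjoint, since every character has an explicit edge at $u$ or at $v$. If $m_v=0$ or $m_u=0$, the corresponding default transition is useless, and deleting it strictly decreases the size, a contradiction. This leaves the case $m_u,m_v\ge 1$, which needs a different modification.

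\textbf{Main obstacle: the case $m_u,m_v\ge 1$.} Here we redirect one default transition instead of just deleting it. For $c$ missing at $u$ we have $\delta(u,c)=\delta(v,c)$, the target of $v$'s explicit edge.

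First try: remove $(u,v)$ and $(v,u)$ and add the missing edges at both states. This changes the size by $m_u+m_v-2$, which is not obviously negative. So instead, reconsider which transitions at $u$ are actually needed. Any explicit edge $(u,w,c)$ with $w=\delta(v,c)$ is redundant: $u$ can get it through its default to $v$, provided $v$ keeps its edge for $c$. Minimality therefore forces every shared character $c$ (one with edges at both $u$ and $v$) to have $\delta(u,c)\ne\delta(v,c)$.

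The aim is to exhibit an equivalent \DdFA in which one of the two states, say $u$, carries all of $\Sigma$ explicitly, and to compare sizes:
\[
  |E_u|+|E_v|+2 \quad\text{versus}\quad |\Sigma| + (\text{edges at } v \text{ differing from } u) + 1 .
\]
This inequality is exactly where the argument may break. If it is not strict, the fallback is to use the weaker statement that \emph{some} minimum \DdFA has no 2-cycles: apply the break at $v$ whenever $m_v\le1$ (size does not increase), and handle $m_v\ge 2$ by a more careful choice of which state to make complete. I expect this case analysis to be the crux of the proof.
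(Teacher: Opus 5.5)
Your proposal does not prove the lemma. The case $m_u,m_v\ge 1$ is left open, and it is the only nontrivial case. The suggested fallback would at best show that \emph{some} minimum \DdFA has no 2-cycles, which is weaker than the claim, and even that is not carried out.

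The missing idea is to \emph{move} transitions rather than copy them. Your construction deletes $(v,u)$ and adds $(v,\delta_D(u,c),c)$ for every character $c$ missing at $v$. But $u$ still has its default transition to $v$, so each original edge $(u,\delta_D(u,c),c)$ for these $c$ is now redundant and can be deleted. After this deletion:
\begin{itemize}
\item $v$ is complete and gives the same values as before;
\item $u$ gets every character it lacks from explicit edges of $v$ with the correct targets, whether the character was originally missing at $u$ (so $v$ always had it) or its edge was just deleted;
\item no other state's transitions or default cycles change.
\end{itemize}
The number of labeled transitions is unchanged and one default transition disappears, so the size drops by exactly $1$ for all values of $m_u,m_v$, with no case analysis. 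This is the paper's argument with the roles of $u$ and $v$ exchanged: for each $c$ missing at $u$, it replaces $v$'s edge $(v,w,c)$ by $(u,w,c)$ and then deletes $(u,v)$.

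The comparison you wrote down but did not evaluate gives the same conclusion. Make $u$ complete and let $v$ keep an explicit edge only where it disagrees with $u$. For characters missing at $u$ or at $v$ the two states already agree, so $v$ needs at most one edge per character with an edge at both states, that is, at most $|E_v|-m_u$ edges. Using $|\Sigma|=|E_u|+m_u$, the new size is at most $|\Sigma|+|E_v|-m_u+1=|E_u|+|E_v|+1$. This is strictly below $|E_u|+|E_v|+2$, so the inequality you feared might fail always holds.

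A minor point on length $1$: the definition only requires a default transition to a \emph{different} state from states lacking some character, so a complete state could in principle carry a default self-loop. Rather than appealing to the definition, note (as the paper does) that the default cycle property forces such a state to be complete, so deleting the self-loop saves one.
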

\begin{proof}
	Let $D$ be a minimum  \CursiveDdFA. Assume for the sake of contradiction that $D$ has a default cycle of length one $(u,u)$. It follows from the default cycle property that $u$ has an outgoing transition for every character in $\Sigma$. Hence, removing the self-loop does not change the transition function and thus $D$ is not minimum. Assume now that $D$ has a default cycle of length two consisting of the default transitions $(u,v)$ and $(v,u)$.  Let $c$ be a character for which $u$ does not have a transition. By the default cycle property $v$ has a transition $(v, w, c)$. We add the transition $(u,w,c)$ and delete the transition $(v, w, c)$. We do this for every such character $c$. We then delete the default transition $(u,v)$. This produces an equivalent \DdFA with one fewer edge. Thus $D$ is not minimum. 
\end{proof}

%%%%%%%%%%%%%%%%%%%%%%%%%%%%%%%%%%% Hardness %%%%%%%%%%%%%%%%%%%%%%%%%%%%%%%%%%%

\section{NP-Hardness of the \texorpdfstring{\DdFA}{D2FA} Compression Problem}\label{sec:Hardness}
In this section, we prove \Cref{thm:set_cover_reduction}. That is, we prove that the \DdFA compression problem is NP-hard by a reduction from the set cover problem. An instance of the set cover problem consists of a tuple $(U, \mathcal{S}, k)$, where $U = \{x_1, \dots, x_u\}$ is a universe of $u$ elements, $\mathcal{S} = \{S_1, \dots, S_m\}$ is a collection of $m$ subsets of $U$, and $k \in \{2, \dots, m\}$ is a threshold. The problem is to determine whether there exists a subset $\mathcal{S}^* \subseteq \mathcal{S}$ of size at most $k$ that covers the universe (i.e., $\bigcup_{S \in \mathcal{S}^*} S = U)$.

\subsection{Construction of \texorpdfstring{$D_{SC}$}{D\_SC}}\label{sec:reduction}
Given an instance of set cover $(U, \mathcal{S}, k)$, we construct a DFA $D_{SC} = (V, E, \Sigma, \theta, \Phi)$ such that solving the \DdFA compression problem on $D_{SC}$ allows us to determine if a set cover of size at most $k$ exists. The DFA $D_{SC}$ for an instance of set cover with $k=5$ is visualized in \Cref{fig:Dsc}.

\subparagraph*{The alphabet of $D_{SC}$.}
The alphabet is defined as $\Sigma = U \cup C$, where $C$ is a set of \emph{control characters} defined as follows.  For each  $S_j \in \mathcal{S}$, we define the character sets
\[ 
    C_j^{\text{in}} = \{c_{j,1}^{\text{in}}, \dots, c_{j,|S_j|+1}^{\text{in}}\} \quad \text{and} \quad C_j^{\text{out}} = \{c_{j,1}^{\text{out}}, \dots, c_{j,|S_j|+1}^{\text{out}}\}\; .
\] 
The set of control characters is 
\[
    C = \{\$\} \cup \left(\bigcup_{j=1}^m C_j^{\text{in}}\right) \cup \left(\bigcup_{j=1}^m C_j^{\text{out}}\right)\; ,
\]
where $\$$ is a special character.
The alphabet has size $|\Sigma| = u + 1 + 2m + 2\sum_{j=1}^m |S_j|$.

\subparagraph*{The states of $D_{SC}$.}
The DFA $D_{SC}$ contains one central sink state $t$, along with $k$ identical structures we call \emph{gadgets}, denoted $G_i$ for $i = 1, \dots, k$. Each gadget $G_i$ contains $m+2$ states, labeled $\vv{i}{j}$ for $j \in \{\text{in}, \text{out}, 1, \dots, m\}$. 
The states $\vv{i}{1}, \dots, \vv{i}{m}$ are the \emph{\vijstates} and the states $\vin{i}$ and $\vout{i}$ are the \emph{\inoutstates}. 
We ignore which states are start and accepting states, as these do not influence the reductions.

\subparagraph*{The transitions of $D_{SC}$.}
The transitions of $D_{SC}$ are categorized into four types, denoted by the colors \emph{blue}, \emph{red}, \emph{pink}, and \emph{gray}. Let $S_{\text{in}} = S_{\text{out}} = \emptyset$ and $\hat{\mathcal{S}} = \{S_{\text{in}}, S_{\text{out}}, S_1, \dots, S_m\}$. For each gadget $G_i$, the transitions of $D_{SC}$ are defined as follows.
\begin{description}
    \item[Blue transitions:] For each $j \in \{\text{in}, \text{out}, 1, \dots, m\}$, we add a blue transition from $\vv{i}{j}$ to the sink state $t$ for every character in $\overline{S_j} = U \setminus S_j$.
    \item[Red transitions:] We have red transitions inside each gadget (internal) and between gadgets (external). These transitions are:
    \begin{itemize}
        \item \emph{Internal:} For each $j \in \{1, \dots, m\}$, we add a red transition from $\vin{i}$ to $\vv{i}{j}$ for every character in $C_j^{\text{in}}$, and a red transition from $\vv{i}{j}$ to $\vout{i}$ for every character in $C_j^{\text{out}}$.
        \item \emph{External:} Add a red transition to the next gadget, from state $\vout{i}$ to state $\vin{i + 1}$ with label $\$$ (with $\vout{k}$ connecting back to $\vin{1}$).
    \end{itemize}
    \item[Pink transitions:] For every red transition $(u, v, c)$, add a pink self-loop $(v, v, c)$.
    \item[Gray transitions:] To ensure $D_{SC}$ is a complete DFA, we add the following remaining transitions:
    \begin{itemize}
        \item For every state $v \neq t$, we add a gray self-loop at $v$ for any character in $\Sigma$ that does not already have a transition from $v$.
        \item For every character in $U$, we add a gray transition from $t$ to $\vin{1}$.
        \item For every character in $C$, we add a gray self-loop at $t$. 
    \end{itemize}
\end{description}
Note that there is a blue transition from  $\vin{i}$ and $\vout{i}$ to $t$ for every character in $U$, since $S_{\text{in}} = S_{\text{out}} = \emptyset$.
Also note that only $\vout{i}$ has incoming red transitions from distinct states.

Since $D_{SC}$ is completely specified with $1 + k(m+2)$ states, it has $|E| = \left(1 + k(m+2)\right)|\Sigma|$ transitions.

\begin{figure}[t]
    \centering
    \includegraphics[clip, trim=2.8cm 17.1cm 2.78cm 2.6cm, width=1.00\textwidth]{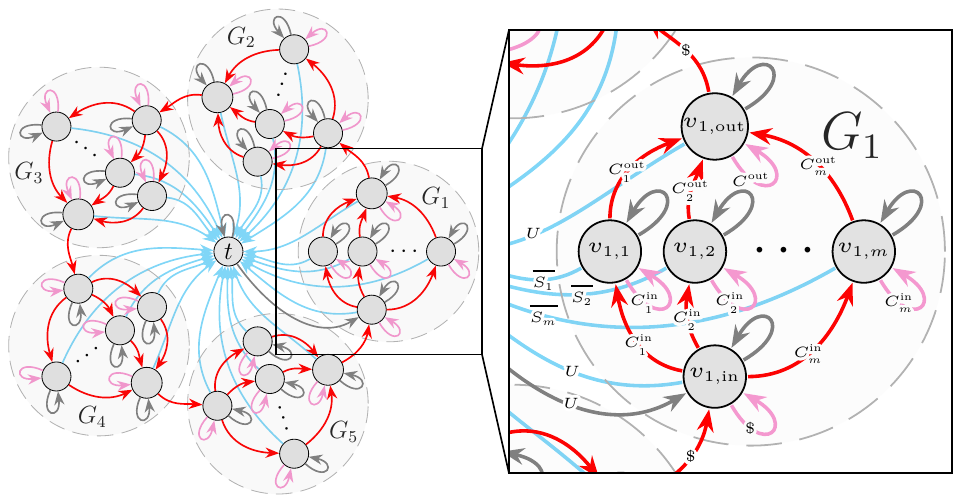}
    \caption{An illustration of the states and transitions in $D_{SC}$. The characters (or sets of characters) corresponding to each (non-gray) transition is visible in the zoomed view on the right. The remaining characters outgoing from each state corresponds to gray edges. We use the notation $C^\text{out}=\bigcup_{j=1}^m C_j^{\text{out}}$.}
    \label{fig:Dsc}
\end{figure}

\subsection{Properties of \texorpdfstring{$D_{SC}$}{D\_SC} and an Equivalent \texorpdfstring{\DdFA{}}{D2FA}}
We say that two states $u, v \in V$ in a DFA $D$ (or similarly for a \DdFA) \emph{agree} on a character $c \in \Sigma$ if $\delta_D(u, c) = \delta_D(v, c)$, i.e., their transitions with label $c$ have the same target state.
If a state $u \in V$ has an outgoing default transition $(u, v) \in F$ but no outgoing transition with label $c \in \Sigma$, we say that the transition for $c$ at $u$ is \textit{replaced} by the default transition. 

The following describes how adding default transitions to $D_{SC}$ can (or cannot) replace transitions.

\begin{lemma}\label{lem:defaultToT}
    In a \CursiveDdFA equivalent to $D_{SC}$, a default transition entering or leaving the sink state $t$ replaces no transitions.
\end{lemma}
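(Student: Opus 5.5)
A default transition $(u,v)$ can only replace the transition for $c$ at $u$ if, after following it, we reach the correct target. That requires $\delta(v,c)=\delta(u,c)$, where $\delta$ is the transition function of $D_{SC}$, which any equivalent \DdFA must reproduce. The plan is therefore to show that $t$ agrees with no other state on any character. Then neither $(u,t)$ nor $(t,v)$ can replace anything.

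First the justification of the "only if". Suppose $u$ has no transition for $c$ and a default transition $(u,v)$. By the recursive definition, $\delta_D(u,c)=\delta_D(v,c)$. Equivalence with $D_{SC}$ gives $\delta_D(x,c)=\delta_{D_{SC}}(x,c)$ for all $x$, so $u$ and $v$ must agree on $c$ in $D_{SC}$.

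Now check agreement with $t$ character by character, for an arbitrary state $v\neq t$ in some gadget $G_i$.

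\emph{Case $c\in C$.} Here $\delta(t,c)=t$, since $t$ has a gray self-loop on every control character. For $v\neq t$, every transition on a control character is red, pink or gray. Red transitions go to $\vv{i}{j}$, $\vout{i}$ or $\vin{i+1}$, pink and gray ones are self-loops at $v$, and none of these targets is $t$. So $\delta(v,c)\neq t$.

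\emph{Case $c\in U$.} Here $\delta(t,c)=\vin{1}$. For $v\neq t$, a character $x\in U$ either has a blue transition to $t$ (if $x\notin S_j$) or a gray self-loop at $v$ (if $x\in S_j$), because no red or pink transition carries a label in $U$. Agreement would therefore require $v=\vin{1}$ with a self-loop on $c$. But $\vin{1}$ has $S_{\text{in}}=\emptyset$, so all of its $U$-transitions are blue and go to $t$, not to $\vin{1}$. Hence there is no agreement.

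In both cases $t$ agrees with no other state on any character, so every default transition entering or leaving $t$ replaces no transition. The main obstacle is just the bookkeeping that no red or pink transition uses a character of $U$. I would verify this directly from the definitions: red labels lie in $C_j^{\text{in}}$, $C_j^{\text{out}}$ or $\{\$\}$, and pink transitions copy the red labels.
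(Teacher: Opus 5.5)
Your proof is correct and follows the paper's argument: you show that $t$ agrees with no other state, splitting into $c\in U$ and $c\in C$. For $c\in U$, $\delta(t,c)=\vin{1}$, but $\vin{1}$ sends all of $U$ to $t$ and no other state has a $U$-labeled transition into $\vin{1}$; for $c\in C$, no $C$-labeled transition from another state enters $t$. You additionally spell out, via the recursive definition of $\delta_D$ and equivalence with $D_{SC}$, why replacement forces agreement, a step the paper leaves implicit.
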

\begin{proof}
    To show that a default transition with $t$ as an endpoint cannot replace any transitions, we must prove that $t$ does not agree with any other state on any character. We analyze this in two parts based on the alphabet partition $\Sigma = U \cup C$:
    \begin{description}
        \item[Agreement on $U$:] Every outgoing transition from $t$ labeled with a character $c \in U$ has target state $\vin{1}$. However, $\vin{1}$ has a blue transition to $t$ for every character in $U \setminus S_{\text{in}} = U$. Also, no other state has transitions labeled by $U$ with target state $\vin{1}$. Therefore, the sink state $t$ has unique transitions for every character in $U$.
        \item[Agreement on $C$:] Every outgoing transition from $t$ labeled with a character $c \in C$ is a self-loop. Because $t$ only has incoming transitions from other states labeled with characters from the universe $U$, no other state can agree with $t$ on a character in $C$.
    \end{description}
    Consequently, $t$ does not agree with any other state on any character.
\end{proof}

\begin{corollary}\label{cor:noDefaultToT}
    A minimum \DdFA equivalent to $D_{SC}$ contains no default transitions incident to the sink state $t$.
\end{corollary}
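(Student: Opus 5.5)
The plan is an exchange argument. Take a minimum \DdFA $D=(V,E,\Sigma,\theta,\Phi,F)$ equivalent to $D_{SC}$ and suppose it contains a default transition $(x,y)\in F$ with $x=t$ or $y=t$. From $D$ we will build an equivalent \DdFA $D'$ with strictly fewer edges, contradicting minimality.

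The construction of $D'$ goes as follows.
\begin{itemize}
\item Delete $(x,y)$ from $F$.
\item For every character $c$ with no transition out of $x$ in $E$, add the explicit transition $(x,\delta_D(x,c),c)$.
\end{itemize}
By \Cref{lem:defaultToT}, $(x,y)$ replaces no transitions. Formally, $x$ and $y$ agree on no character: $\delta_D(x,c)=\delta_{D_{SC}}(x,c)\neq\delta_{D_{SC}}(y,c)=\delta_D(y,c)$ for all $c$. Hence, for any $c$ missing at $x$, we would need $\delta_D(x,c)=\delta_D(y,c)$, which cannot hold. So $x$ already has a transition for every character in $\Sigma$. Consequently no transitions are added, and $D'$ is simply $D$ with $(x,y)$ removed. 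Its size is $|E|+|F|-1$.

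It remains to check that $D'$ is a valid \DdFA equivalent to $D$.
\begin{itemize}
\item At $x$, every character has an explicit transition, so $\delta_{D'}(x,c)=\delta_D(x,c)$. The requirement that each state have a full set of transitions or a default transition holds at $x$.
\item For any other state $w$ and character $c$, the recursive evaluation of $\delta_D(w,c)$ either never reaches $x$, or reaches $x$ and stops there using an explicit transition. In neither case is the edge $(x,y)$ followed. So the evaluation is identical in $D'$.
\item Removing an edge only destroys default cycles and creates none. Hence the default cycle property and well-definedness are preserved.
\end{itemize}
Thus $D'$ is equivalent to $D$, and so to $D_{SC}$, and it is smaller, a contradiction.

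The only point needing care is the claim that $x$ has all transitions explicitly. This follows from $\delta_D=\delta_{D_{SC}}$ together with the non-agreement established in \Cref{lem:defaultToT}. The remaining steps are routine bookkeeping on the recursive definition of $\delta_D$.
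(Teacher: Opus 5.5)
Your proof is correct and is essentially the argument the paper leaves implicit after \Cref{lem:defaultToT}: a default transition at $t$ replaces nothing, so deleting it gives a smaller equivalent \DdFA, contradicting minimality. Your non-agreement step does not literally cover a default self-loop $(t,t)$, since then $x=y$. In that case the default cycle property (or \Cref{lma:3-cycles}) already forces $t$ to have all its transitions, so the same deletion works.
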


\begin{lemma}\label{lem:graytransitions}
    In a \CursiveDdFA equivalent to $D_{SC}$, no gray transition can be replaced by a default transition.
\end{lemma}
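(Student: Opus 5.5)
The plan is to mirror the proof of \Cref{lem:defaultToT}. A transition $(u,w,c)$ at $u$ can be replaced by a default transition $(u,v)$ only if $\delta(v,c)=w$, that is, only if $u$ and $v$ agree on $c$. Since equivalence preserves the transition function, $\delta(v,c)$ is the target of $v$'s $c$-transition in $D_{SC}$. So it suffices to show that, for every gray transition $(u,w,c)$ of $D_{SC}$, no other state $v\neq u$ has a $c$-transition with target $w$. There are three kinds of gray transitions.

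\textbf{Transitions from $t$ labeled in $U$, and self-loops at $t$.} Both are covered directly by \Cref{lem:defaultToT}: $t$ agrees with no other state on any character, so no default transition leaving $t$ replaces anything.

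\textbf{Gray self-loops $(v,v,c)$ at $v\neq t$.} Suppose another state $v'\neq v$ agrees with $v$ on $c$. Then $v'$ has a transition $(v',v,c)$, so $c$ labels a non-self-loop transition entering $v$. I would list all such incoming transitions from the construction:
\begin{itemize}
    \item blue transitions enter only $t$;
    \item gray non-self-loops go from $t$ to $\vin{1}$ with labels in $U$;
    \item red transitions $(x,v,c)$ enter $v$ with control labels.
\end{itemize}
For $v=\vin{1}$ and $c\in U$, the blue transitions $\vin{1}\to t$ on all of $U$ show that $\vin{1}$ has no gray self-loop on $U$, so the second case does not arise. For the red case, the pink rule adds a pink self-loop $(v,v,c)$ for every red transition $(x,v,c)$. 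Hence $v$'s $c$-transition is pink, not gray. Thus every label $c$ of an incoming non-self-loop transition at $v$ already has a non-gray transition at $v$, so no gray self-loop at $v$ shares a label with such an incoming transition. It follows that no state agrees with $v$ on the label of a gray self-loop.

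The main care point is this incoming-transition bookkeeping. In particular, I need to check that $\vout{i}$, which receives red transitions from several set-states, is covered by the pink loops for all of $C^{\text{out}}$ and for $\$$ at $\vin{i+1}$. Once every gray transition is shown to have a unique (state, label, target) agreement profile, any default transition from its source would force $\delta$ to change, contradicting equivalence. This completes the argument.
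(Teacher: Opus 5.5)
Your proof is correct and follows essentially the same route as the paper's. Gray transitions at $t$ are handled by \Cref{lem:defaultToT}. For $v\neq t$, replacing a gray self-loop on $c$ would require some other state to have a $c$-transition into $v$; such a transition is red, so $v$ has a pink self-loop on $c$, a contradiction. The only cosmetic difference is that you rule out the gray transitions from $t$ into $\vin{1}$ directly, by noting that $\vin{1}$'s transitions on $U$ are blue, whereas the paper simply invokes \Cref{lem:defaultToT} to restrict attention to non-sink states.
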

\begin{proof}
    By \Cref{lem:defaultToT}, any default transition incident to the sink state $t$ replaces no transitions; hence, it suffices to consider default transitions between non-\tstates. Assume for the sake of contradiction that a gray transition from a non-\tstate $\vv{i}{j}$ with label $c \in \Sigma$ is replaced by a default transition to another non-\tstate $v' \neq \vv{i}{j}$.
    
    All gray transitions from non-\tstates form self-loops. Thus for the default transition to replace the gray transition, $v'$ must have $\delta_{D_{SC}}(v', c) = \vv{i}{j}$. All transitions between distinct non-\tstates are red transitions, so the transition $(v',\vv{i}{j},c)$ is red. Furthermore, the target state of every red transition has a pink self-loop with matching label. Thus, \vv{i}{j} has a pink self-loop with label $c$, contradicting the assumption that the transition from \vv{i}{j} with label $c$ is gray.
\end{proof}

\begin{lemma}\label{lma:maxAgreement}
    Two distinct states in $D_{SC}$ agree on exactly $u + 1$ characters if there exists a red transition between them. Otherwise, they agree on at most $u$ characters.
\end{lemma}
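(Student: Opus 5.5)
We argue directly from the transition structure, splitting the alphabet as $\Sigma = U \cup C$ and counting agreements on each part separately. By \Cref{lem:defaultToT}, the sink state $t$ agrees with no other state on any character. So it suffices to consider two distinct non-\tstates $\vv{i}{j}$ and $\vv{i'}{j'}$ with $j,j' \in \{\text{in},\text{out},1,\dots,m\}$.

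\emph{Agreement on $U$.} A non-\tstate $\vv{i}{j}$ has a blue transition to $t$ for every $x \in \overline{S_j}$. For every $x \in S_j$ it has a gray self-loop, since red and pink transitions carry only control characters. Two self-loops at distinct states never agree, and a self-loop at one state never agrees with a transition to $t$ at the other. Hence the two states agree on $x \in U$ exactly when $x \notin S_j \cup S_{j'}$. The number of agreements on $U$ is therefore $|U \setminus (S_j \cup S_{j'})| \le u$.

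\emph{Agreement on $C$.} Every transition from a non-\tstate labeled by $c \in C$ is one of three kinds: a gray self-loop, a pink self-loop, or a red transition to another non-\tstate. There are then three cases for two distinct states agreeing on $c$.
\begin{itemize}
\item Both transitions are self-loops. They have different targets, so this is impossible.
\item Both are red transitions to a common target $w$. The only state with incoming red transitions from distinct states is $\vout{i}$, whose incoming red labels $C_j^{\text{out}}$ for different $j$ are pairwise disjoint, so this is also impossible.
\item One state has a red transition $(v, v', c)$ to the other, and $v'$ has a self-loop labeled $c$. This self-loop is exactly the pink self-loop.
\end{itemize}
Thus the agreement on $C$ equals the number of red transitions between the two states. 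No pair has red transitions in both directions, so we just count the red transitions in one direction.

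\emph{Combining.} If there is no red transition between the states, they agree only on $U$, hence on at most $u$ characters. If there is one, there are two cases.
\begin{itemize}
\item The red transitions are internal, between $\vin{i}$ (or $\vout{i}$) and $\vv{i}{j}$. They are labeled by $C_j^{\text{in}}$ (resp.\ $C_j^{\text{out}}$), giving $|S_j|+1$ agreements on $C$. On $U$ we get $|U \setminus (S_j \cup \emptyset)| = u - |S_j|$ agreements, since $S_{\text{in}} = S_{\text{out}} = \emptyset$. The total is $u+1$.
\item The red transition is the external one from $\vout{i}$ to $\vin{i+1}$ with label $\$$. This gives one agreement on $C$ and all $u$ agreements on $U$, again totaling $u+1$.
\end{itemize}

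The main care point is the case analysis on $C$. We must verify that no two distinct states agree on a control character by both having red transitions to a common target. This reduces to the disjointness of the label sets $C_j^{\text{out}}$, together with the observation that the red transitions out of $\vin{i}$ for different $j$ go to different targets.
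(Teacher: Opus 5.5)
Your proof is correct and follows essentially the same route as the paper. Both exclude $t$ via \Cref{lem:defaultToT} and then show that agreements come only from pairs of blue transitions (contributing $u-|S_j\cup S_{j'}|$) and from a red transition paired with the pink self-loop it induces (contributing $|S_j\cup S_{j'}|+1$ exactly when a red transition joins the two states); the only difference is organizational, since you split by $U$ versus $C$ and handle gray self-loops directly rather than invoking \Cref{lem:graytransitions} and enumerating colour combinations.
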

\begin{proof}
    By \Cref{lem:defaultToT}, we can restrict our focus to pairs of distinct non-\tstates.
    Let $\vv{i}{j}$ and $\vv{p}{q}$ be two distinct non-\tstates. By \Cref{lem:graytransitions}, two states can only agree on characters where the corresponding two transitions (one from each state) are colored blue, red, or pink. We analyze each of the six possible color combinations of the two transitions. 
    We first prove that for four of these combinations the two states cannot agree on any character.
    \begin{itemize}
        \item[] \textbf{Red-Red:} Only the out-\inoutstate of each gadget receives incoming red transitions from multiple states. Thus, two distinct states \vv{i}{j} and \vv{p}{q} can only have the same target state via red transitions if they belong to the same gadget ($i = p$) with distinct set indices $j, q \in \{1, \dots, m\}$.
        However, this implies that their red transitions are labeled by the distinct sets $C_j^{\text{out}}$ and $C_q^{\text{out}}$.
        \item[] \textbf{Pink-Pink:} Pink transitions are self-loops, so $\delta_{D_{SC}}(\vv{i}{j}, c) = \vv{i}{j} \neq \vv{p}{q} = \delta_{D_{SC}}(\vv{p}{q}, c)$ for any character $c$.
        \item[] \textbf{Blue-Red and Blue-Pink:} All blue transitions have $t$ as target state, whereas no red or pink transitions have the $t$ as target state.
    \end{itemize}
    {\flushleft 
    This leaves only blue-blue and red-pink combinations.
    }
    \begin{itemize}
        \item[] \textbf{Blue-Blue:} All blue transitions have target state $t$. The number of characters on which $\vv{i}{j}$ and $\vv{p}{q}$ agree from blue transitions is given by $|\overline{S_j} \cap \overline{S_q}| = u - |S_j \cup S_q| \le u$.
        \item[] \textbf{Red-Pink:} There are two cases based on the presence of a red transition between the two states:
        \begin{description}
            \item[Case 1:] \emph{A red transition exists between $\vv{i}{j}$ and $\vv{p}{q}$.} By construction, pink transitions ensure that $\vv{i}{j}$ and $\vv{p}{q}$ agree on all characters used to label red transitions between them. Red transitions only occur in three configurations: (i)~from $\vin{i}$ to a \vijstate with labels $C_q^{\text{in}}$; (ii)~from a \vijstate to $\vout{i}$ with labels $C_j^{\text{out}}$; and (iii)~from $\vout{i}$ to $\vin{i+1}$ with label $\$$. The sizes of these sets are $|S_q| + 1$, $|S_j| + 1$, and $1$, respectively. This is equal to $|S_j \cup S_q| + 1$ characters in all cases as either $\vv{i}{j}$ or $\vv{p}{q}$ (or both) are boundary states with $S_{\text{in}} = S_{\text{out}} = \emptyset$.
            
            \item[Case 2:] \emph{No red transition exists between $\vv{i}{j}$ and $\vv{p}{q}$.} Because pink transitions are always self-loops, the red transitions cannot have the same target state as the pink transitions. So in this case, $\vv{i}{j}$ and $\vv{p}{q}$ do not agree on any characters from red and pink transitions.
        \end{description}
    \end{itemize}
    
    Therefore, if a red transition connects the two states, they agree on exactly $(u - |S_j \cup S_q|) + (|S_j \cup S_q| + 1) = u + 1$ characters. Conversely, if no red transition connects them, their agreement is based only on the blue-blue combination, which is at most $u$ characters.
\end{proof}

%%%%%%%%%%%%%%%%%%%%%%%%%%%%%%%%%%% Correctness of the Reduction %%%%%%%%%%%%%%%%%%%%%%%%%%%%%%%%%%%
\subsection{Correctness of the Reduction}
We now prove the following lemma from which the correctness of the reduction from set cover follows:

\begin{lemma}\label{lma:set_cover_reduction}
    The set cover instance $(U, \mathcal{S}, k)$ has a set cover of size at most $k$ if and only if there exists a \CursiveDdFA equivalent to $D_{SC}$ of size at most $|D_{SC}| - (m+2)uk$.
\end{lemma}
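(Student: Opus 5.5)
The plan is to work entirely with size reductions. Adding a default transition $(w,v)$ costs one edge and saves at most the number of characters on which $w$ and $v$ agree. So by \Cref{lma:maxAgreement} a single default transition reduces the size by at most $u$. Equality requires that $w$ and $v$ are joined by a red transition and that all $u+1$ agreed characters are actually removed at $w$. By \Cref{cor:noDefaultToT}, $t$ contributes nothing, and there are exactly $k(m+2)$ non-sink states. Hence the target reduction $(m+2)uk$ is the absolute maximum. It is attained exactly when:
\begin{itemize}
\item every non-sink state has a default transition to a red-neighbour;
\item every agreed character is removed at its source.
\end{itemize}
Both directions of the lemma revolve around this ``tightness'' characterisation, which I would establish first.

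For the forward direction, take a cover and pad it to exactly $k$ sets $S_{j_1},\dots,S_{j_k}$ by repetition; repeats are harmless since they sit in different gadgets. Build the default cycle $\vin{i}\to\vv{i}{j_i}\to\vout{i}\to\vin{i+1}\to\cdots$ through all gadgets. Every remaining \vijstate $\vv{i}{q}$ gets a default transition to $\vout{i}$, a red-neighbour, so it hangs as a tree off the cycle. At each state we delete exactly the agreed characters, so each of the $k(m+2)$ default transitions saves $u$ net.

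It remains to verify well-definedness, which amounts to the default cycle property on the big cycle. A character is lost at a cycle state only if that state agrees with its successor on it.
\begin{itemize}
\item For $x\in U$, the blue $x$-transition of $\vin{i}$ is kept precisely when $x\in S_{j_i}$, since then $\vv{i}{j_i}$ has a gray self-loop on $x$. So the covering property gives every $x$.
\item For each control character, agreement with the successor involves only the $u+1$ characters of the single red connection. I would check case by case that some cycle state keeps it. For example, $\$$ survives at $\vin{i}$, whose successor $\vv{i}{j_i}$ has a gray self-loop on $\$$; the labels in $C^{\text{in}}_j$ and $C^{\text{out}}_j$ survive at states whose successor is not the corresponding red endpoint.
\end{itemize}
Tree states are then fine automatically.

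For the converse, assume a \DdFA of size at most $|D_{SC}|-(m+2)uk$; w.l.o.g. it is minimum, so \Cref{lma:3-cycles} applies. By tightness every non-sink state has a default transition along a red edge. The DTG restricted to non-sink states therefore has no roots, so every component contains a cycle, and every cycle runs along red-adjacencies. For $x\in U$, a cycle state keeps $x$ only if its successor lacks the blue $x$-transition, i.e.\ the successor is a \vijstate $\vv{i}{q}$ with $x\in S_q$. So the \vijstates on a cycle must form a cover.

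If some cycle winds once through all $k$ gadgets along the $\$$-edges, it visits exactly one \vijstate per gadget. That yields a cover of size at most $k$.

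The main obstacle is ruling out cycles that stay inside one gadget, such as $\vin{i}\to\vv{i}{j}\to\vout{i}\to\vv{i}{q}\to\vin{i}$, which could use many \vijstates. I would first try to exploit the control characters:
\begin{itemize}
\item in $\vv{i}{q}\to\vin{i}$, the labels in $C^{\text{in}}_q$ are removed at $\vv{i}{q}$;
\item in $\vout{i}\to\vv{i}{q}$, the labels in $C^{\text{out}}_q$ are removed at $\vout{i}$.
\end{itemize}
The aim is to show some character in $C^{\text{in}}_q\cup C^{\text{out}}_q$ then has no surviving transition on the cycle, contradicting the default cycle property. The sizes $|S_q|+1$ suggest a counting argument of this kind. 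Failing that, I would argue directly that such a local cycle, together with the mandatory defaults of the other gadgets, forces some state to save fewer than $u$, contradicting tightness.
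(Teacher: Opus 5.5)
\textbf{The converse has a genuine gap.} Your forward direction and your tightness characterisation are fine and match the paper. The paper handles control characters slightly more quickly: if $c$ were lost at every cycle state, all cycle states would share a $c$-target, which must be $t$, forcing $c\in U$.

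The problem is that you treat the intra-gadget cycle $\vin{i}\to\vv{i}{j}\to\vout{i}\to\vv{i}{q}\to\vin{i}$ as something to rule out, and it cannot be ruled out. In this cycle:
\begin{itemize}
\item at $\vin{i}$ exactly $\overline{S_j}\cup C^{\text{in}}_j$ is removed;
\item at $\vv{i}{j}$ exactly $\overline{S_j}\cup C^{\text{out}}_j$ is removed;
\item at $\vout{i}$ exactly $\overline{S_q}\cup C^{\text{out}}_q$ is removed;
\item at $\vv{i}{q}$ exactly $\overline{S_q}\cup C^{\text{in}}_q$ is removed.
\end{itemize}
So each control character is removed at no more than one of the four states and survives at the others. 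Every default transition saves exactly $u$. The default cycle property holds on $U$ precisely when $S_j\cup S_q=U$. If every gadget gets such a 4-cycle, with its remaining \vijstates hanging off $\vout{i}$, the result is a \DdFA of exactly the target size whenever two sets cover $U$. Hence neither your control-character counting nor your ``some state saves less than $u$'' fallback can succeed: both would prove a false statement.

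\textbf{The missing idea is structural.} It also corrects your belief that such a cycle ``could use many set-states''. Your minimality reduction makes \Cref{lma:3-cycles} applicable, so the cycle has length at least $3$. The only red neighbours of $\vv{i}{j}$ are $\vin{i}$ and $\vout{i}$. Therefore every \vijstate on the cycle has one of these two as predecessor and the other as successor.

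Since $\vin{i}$ and $\vout{i}$ each have only two cycle-neighbours, two \vijstates from the same gadget force the cycle to be exactly the 4-cycle above. By your covering argument its two sets cover $U$, giving a cover of size $2\le k$; this is where the hypothesis $k\ge 2$ is needed. Otherwise the cycle contains at most one \vijstate per gadget, which gives a cover of size at most $k$ directly.

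This case split also replaces your unproved assumption that any non-local cycle winds once through all $k$ gadgets. It is exactly how the paper finishes the converse.
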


\begin{proof}[Proof ($\implies$ direction)]
    Suppose that $(U, \mathcal{S}, k)$ is a \textit{yes-instance} of the set cover problem. Then there exists an index sequence $Z = (z_1, \dots, z_k)$ such that the subcollection $\mathcal{S}_Z = \{S_{z_1}, \dots, S_{z_k}\} \subseteq \mathcal{S}$ covers $U$. We construct a \DdFA $D^2_{SC} = (V, E_{D^2}, \Sigma, \theta, \Phi, F)$ from $D_{SC}$ by adding a set of default transitions $F \subseteq V \times V$ and removing all transitions that can be replaced by $F$. Specifically, for each gadget $G_i$, we modify the transitions as follows:
    \begin{enumerate}
        \item[(i)] \textbf{Boundary-to-set:} Add a default transition $(\vin{i}, \vv{i}{z_i})$ and remove the transitions labeled by characters from $\overline{S_{z_i}}$ or $C_{z_i}^{\text{in}}$ from $\vin{i}$.
        
        \item[(ii)] \textbf{Set-to-boundary:} For each $j \in \{1, \dots, m\}$, add a default transition $(\vv{i}{j}, \vout{i})$ and remove the transitions labeled by characters from $\overline{S_j}$ or $C_j^{\text{out}}$ from $\vv{i}{j}$.
        
        \item[(iii)] \textbf{Inter-gadget:} Add a default transition $(\vout{i}, \vin{i+1})$ to the next gadget and remove the transitions labeled by characters from $U$ or $\$$ from $\vout{i}$ (where $\vin{k+1} = \vin{1}$).
    \end{enumerate}
    Note that only the boundary-to-set default transitions depend on the index sequence $Z$. Furthermore, while set-to-boundary default transitions replace all blue transitions from every \vijstate, each $\vin{i}$ retains its blue transitions labeled by $U\setminus \overline{S_{z_i}} = S_{z_i}$.

    For each of the $k$ gadgets, we have one boundary-to-set, $m$ set-to-boundary, and one inter-gadget default transitions for a total of $|F| = (m+2)k$ default transitions. We add default transitions only between states that have a red transition between them, allowing each default transition to replace $u + 1$ transitions (\Cref{lma:maxAgreement}). Thus, the number of replaced transitions is $(m+2)k(u+1)$. Accounting for the addition of the $|F|$ default transitions, the total size of $D^2_{SC}$ is 
    \[
        |D_{SC}| - (m+2)k(u+1) + (m+2)k = |D_{SC}| - (m+2)ku\; ,
    \]
    which matches the desired size. 

    Next, we show that $\delta_{D^2_{SC}}$ is well-defined (uniquely determines a state on every input) and that $\delta_{D^2_{SC}}=\delta_{D_{SC}}$. Since $D_{SC}$ has a well-defined transition function, and we only remove transitions that can be replaced by $F$, it suffices to show that $D_{SC}^2$ satisfies the default cycle property. By construction, $D^2_{SC}$ contains exactly one default cycle $\zeta$, traversing the states 
    \[
        \vin{1}, \vv{1}{z_1}, \vout{1}, \vin{2}, \vv{2}{z_2}, \vout{2}, \dots, \vin{k}, \vv{k}{z_k}, \vout{k}\; .
    \]
    See~\Cref{fig:D2FAFromSetCover} for an illustration.
    
    Assume for the sake of contradiction that there exists some character $c \in \Sigma$ such that the transitions labeled $c$ have been replaced at every state along $\zeta$. Consequently, all states in $\zeta$ must agree on $c$ (have the same target state). The shared target state must be the \tstate $t$, as this is the only state with incoming transitions from every gadget. Because all transitions with target state $t$ are labeled by characters from $U$, it follows that $c \in U$.
    
    While all \inoutstates have a transition to $t$ for every character in $U$, each \vijstate $\vv{i}{z_i}$ has transitions to $t$ only for characters in $\overline{S_{z_i}}$.
    Since every state in $\zeta$ agrees on $c$, we must have $c \in \overline{S_{z_i}}$ for all $i \in \{1, \dots, k\}$. This implies that $c \notin S_{z_i}$ for all $i \in \{1,\dots,k\}$, contradicting the assumption that $\mathcal{S}_Z = \{S_{z_1}, \dots, S_{z_k}\}$ covers $U$. Therefore, no such character $c$ exists.
    
    Hence, $D^2_{SC}$ is a valid \DdFA equivalent to $D_{SC}$ with a size reduction of $(m+2)ku$.
\end{proof}

\begin{proof}[Proof ($\impliedby$ direction)]
    Suppose $D_{SC}$ has an equivalent \CursiveDdFA $D^2_{SC}$ with a size reduction of at least $(m+2)ku$. By \Cref{lma:maxAgreement}, each default transition replaces at most $u+1$ transitions, giving a reduction of at most $u$. Achieving a reduction of $(m+2)ku$ thus requires at least $(m+2)k$ default transitions.
    
    Because $t$ has no incident default transitions (\Cref{cor:noDefaultToT}), each of the $(m+2)k$ non-\tstates must have exactly one outgoing default transition in $F$, each replacing $u+1$ transitions. It follows from \Cref{lma:maxAgreement} that each such default transition connects states that have a red transition between them.
    
    Since we have $(m+2)k$ default transitions between $(m+2)k$ states (and an out-degree of $1$), $F$ contains at least one default cycle $\zeta$. By \Cref{lma:3-cycles}, $\zeta$ has length at least $3$. 
    Because the default transitions follow red transitions, and any path composed solely of \inoutstates has length at most $2$, $\zeta$ must contain at least one \vijstate. Let $Z$ be the index sequence $(z_1,\dots,z_\ell)$ such that $\zeta$ contains the sequence of set-states $\vv{i_1}{z_1}, \vv{i_2}{z_2}, \dots, \vv{i_\ell}{z_\ell}$. We collect these indices into the subcollection $\mathcal{S}_Z = \{S_{z_1}, S_{z_2}, \dots, S_{z_\ell}\} \subseteq \mathcal{S}$. We claim that $\mathcal{S}_Z$ is a valid set cover for $U$ of size at most $k$.
    
    First, to see that $|\mathcal{S}_Z| \le k$, recall that default transitions achieving maximal savings are between states that have a red transition between them (between a \vijstate $\vv{i}{j}$ and a \inoutstate ($\vin{i}$ or $\vout{i}$) within the same gadget, or between \inoutstates of adjacent gadgets ($\vout{i}$ and $\vin{i+1}$)). Consequently, if $\zeta$ contains two \vijstates from the same gadget, it must form a 4-cycle confined entirely to that gadget, in which case $|\mathcal{S}_Z| = 2 \leq k$ (since $k \ge 2$). Otherwise, $\zeta$ visits at most one \vijstate per gadget, so $|\mathcal{S}_Z| \leq k$.
    
    Second, we show that $\mathcal{S}_Z$ covers $U$. Assume for contradiction that some element $c \in U$ is not covered by $\mathcal{S}_Z$. Then $c \in \overline{S_{z_r}}$ for all \vijstates $\vv{i_r}{z_r} \in\zeta$, where $r \in \{1, \dots, l\}$. Consequently, every \vijstate $\vv{i_r}{z_r}$ has a transition to $t$ with label $c$. Because every \inoutstate in $\zeta$ also has a transition to $t$ with label $c$, the transition for $c$ has been replaced for all states in $\zeta$. Consequently, $D^2_{SC}$ does not possess the default cycle property, contradicting its validity.
    
    Hence, $\mathcal{S}_Z$ covers $U$, and $(U, \mathcal{S}, k)$ is a \textit{yes-instance} of set cover.
\end{proof}

The proof of \Cref{thm:set_cover_reduction} follows from \Cref{lma:set_cover_reduction}.

\begin{figure}[t]
    \centering
    \includegraphics[clip, trim=2.8cm 17.1cm 2.78cm 2.6cm, width=1.00\textwidth]{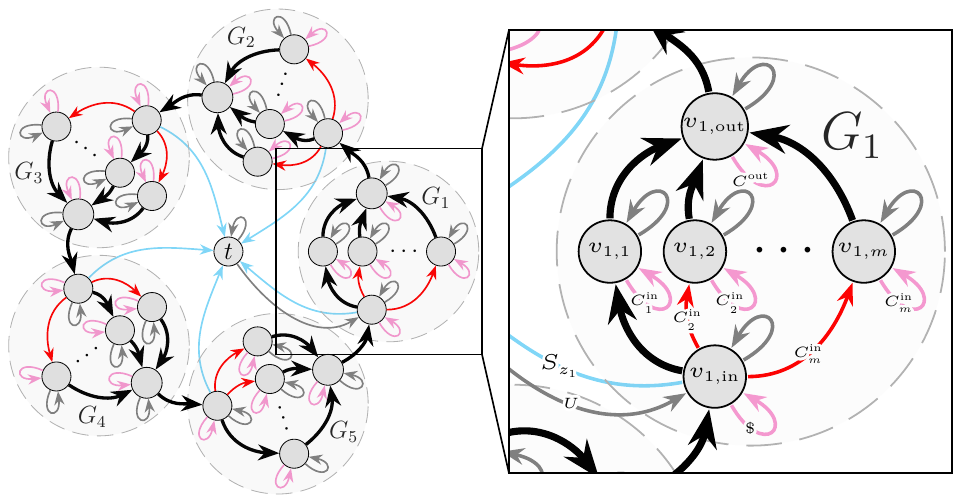}
    \caption{The \DdFA $D^2_{SC}$ (equivalent to the DFA from \Cref{fig:Dsc}). The black, bold transitions represent the added default transitions. The default transitions from the states $\vin{1}, \dots, \vin{k}$ specify the corresponding solution to the set cover instance.}
    \label{fig:D2FAFromSetCover}
\end{figure}

%%%%%%%%%%%%%%%%%%%%%%%%%%%%%%%%%%% MST approximation %%%%%%%%%%%%%%%%%%%%%%%%%%%%%%%%%%%
\section{Approximation Ratio of the MSF Compression}\label{sec:ACAApx}
In this section, we establish \Cref{thm:apxRatio}. That is, we prove that using the MSF compression for the \DdFA compression problem has an approximation ratio of $3/2$. We also show that this is tight.

\paragraph*{Upper bound on the approximation ratio}
The key idea is to transform an optimal solution into a forest-restricted \DdFA, allowing us to compare this intermediate \DdFA against both the MSF compression and the optimum.

Specifically, let $D$ be an input DFA with alphabet $\Sigma$ of size $\sigma$ that maximizes the approximation ratio $\alpha = {|D^2_{\text{MSF}}|}/{|D^2_{\text{OPT}}|}$. Here, $D^2_{\text{MSF}}$ is the MSF compression and $D^2_{\text{OPT}}$ is an optimal \DdFA, both equivalent to $D$.

From $D^2_{\text{OPT}}$, we construct an acyclic \DdFA, denoted $D^2_{\text{O2A}}$, as follows. For each cycle in the default transition graph (DTG) of $D^2_{\text{OPT}}$, we remove the default transition that replaces the fewest transitions, and insert the transitions it replaced.

% From the optimal $D^2_{\text{OPT}}$, we construct an acyclic \DdFA, denoted $D^2_{\text{O2A}}$, by identifying each cycle in the default transition graph (DTG) of $D^2_{\text{OPT}}$, removing the single default transition that yields the smallest size reduction, and inserting the transitions replaced by that default transition.

% \inge{is this well-defined? Is it the one where we have to reinsert the fewest transitions that was replaced by this?}
% \inge{"... as follows". For each cycle in the default transition graph (DTG) of $D^2_{\text{OPT}}$, we find the default transition in the cycle that replaces the fewest transitions, remove that default transition, and insert the transitions it replaced.}

We decompose $D^2_{\text{OPT}}$ and $D^2_{\text{O2A}}$ into subgraphs based on the weakly connected components in the DTG of $D^2_{\text{OPT}}$. Specifically, let $V_i$ be the states of the $i$th weakly connected component in the DTG of $D^2_{\text{OPT}}$. The subgraph $C^2_{i, \text{OPT}}$ includes all transitions and default transitions in $D^2_{\text{OPT}}$ outgoing from states in $V_i$ and the endstates of these transitions. Correspondingly, the subgraph $C^2_{i, \text{O2A}}$ includes all transitions and default transitions in $D^2_{\text{O2A}}$ outgoing from states in $V_i$ and the endstates of these transitions. Note that every transition and default transition of $D^2_{\text{OPT}}$ and $D^2_{\text{O2A}}$ is in exactly one component.

% \inge{Specifically, let $V_i$ be the states of the $i$th connected component of the DTG of $D^2_{\text{OPT}}$. The subgraph $C^2_{i, \text{OPT}}$ includes all transitions and default transitions in $D^2_{\text{OPT}}$ from states in $V_i$ and the endstates of these transitions. Note that, any transition and default transition of  $D^2_{\text{O2A}}$ is in exactly one component.}

% Specifically, let $C^2_{i, \text{OPT}}$ denote the $i$-th subgraph of $D^2_{\text{OPT}}$. The subgraph includes all transitions and default transitions in $D^2_{\text{OPT}}$ from states in the $i$-th connected component of the DTG of $D^2_{\text{OPT}}$. Correspondingly, let $C^2_{i, \text{O2A}}$ denote the $i$-th subgraph of $D^2_{\text{O2A}}$.
Let $C^2_{\text{OPT}}$ and $C^2_{\text{O2A}}$ be the component pair maximizing the ratio ${|C^2_{i, \text{O2A}}|} /{|C^2_{i, \text{OPT}}|}$, referred to as the worst-case component ratio $\beta$, such that
\[
    \beta = \frac{|C^2_{\text{O2A}}|}{|C^2_{\text{OPT}}|} = \max_i\left\{\frac{|C^2_{i, \text{O2A}}|}{|C^2_{i, \text{OPT}}|}\right\}\; .
\]
Since $|D^2_{\text{O2A}}| = \sum_i |C^2_{i, \text{O2A}}|$ and $|D^2_{\text{OPT}}| = \sum_i |C^2_{i, \text{OPT}}|$, and $|C^2_{i, \text{O2A}}| \le \beta \cdot |C^2_{i, \text{OPT}}|$ for all $i$, we have
\begin{equation}\label{eq:betaGeO2A}
    \frac{|D^2_{\text{O2A}}|}{|D^2_{\text{OPT}}|} = \frac{\sum_i |C^2_{i, \text{O2A}}|}{\sum_i |C^2_{i, \text{OPT}}|} \leq \frac{\sum_i \beta \cdot |C^2_{i, \text{OPT}}|}{\sum_i |C^2_{i, \text{OPT}}|} = \beta.
\end{equation}
    
Thus, the worst-case component ratio $\beta$ is an upper bound to the approximation ratio of $D^2_{\text{O2A}}$. In order to bound $\beta$, the following lemma establishes the structure of $C^2_{\text{OPT}}$ when $\beta$ is maximized.
\begin{lemma}\label{lma:worstCaseStructure}
    % If the worst-case component ratio is maximized, then $C^2_{\text{OPT}}$ has at least $2\sigma$ transitions, and its default transitions form a simple cycle.
    % Let $C^2_{\text{OPT}}$ and $C^2_{\text{O2A}}$ be a component pair achieving the worst-case component ratio $\beta$. If $\beta$ is maximized, then $C^2_{\text{OPT}}$ has at least $2\sigma$ transitions, and its default transitions form a simple cycle.
    Let $C^2_{\text{OPT}}$ and $C^2_{\text{O2A}}$ be a component pair achieving the component ratio $\beta$. If $\beta$ is maximized over all valid input DFAs, then $C^2_{\text{OPT}}$ has at least $2\sigma$ transitions, and its default transitions form a simple cycle.
\end{lemma}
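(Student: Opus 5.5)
We argue by cases on the structure of the worst component, showing that any deviation from the claimed structure gives ratio $1$ or a component that can be modified to increase $\beta$.

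\textbf{Step 1: the component contains a cycle.} If the DTG component of $C^2_{\text{OPT}}$ is acyclic, the construction of $D^2_{\text{O2A}}$ leaves it untouched. Then $|C^2_{\text{O2A}}| = |C^2_{\text{OPT}}|$ and the ratio is $1$. The tight family promised later (or simply Figure~\ref{fig:allowingcyclesinDFA}) gives a component with ratio strictly above $1$, so a maximizing component must contain a default cycle. Being a weakly connected component of a pseudo-forest, it contains exactly one cycle $\zeta$. By \Cref{lma:3-cycles} we may assume $|\zeta|\ge 3$.

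\textbf{Step 2: the component is exactly the cycle.} Write $C^2_{\text{OPT}}$ as $\zeta$ plus in-trees hanging off it. The difference $|C^2_{\text{O2A}}| - |C^2_{\text{OPT}}|$ equals $r-1$, where $r$ is the number of transitions replaced by the removed default transition of $\zeta$, the minimum over $\zeta$. Tree states contribute to $|C^2_{\text{OPT}}|$ but not to this difference. Hence
\[
\beta = 1 + \frac{r-1}{|C^2_{\text{OPT}}|},
\]
and deleting the tree states (and their outgoing transitions) from the instance only decreases the denominator while keeping $r$ fixed. This requires checking that the restricted DFA on the cycle states still yields a valid instance whose optimum restricted to these states is the same cycle. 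I would argue this by noting that the cycle part of an optimal solution stays feasible, and that an optimal solution of the restricted instance must do at least as well, which only helps the bound. To keep a genuine DFA, one can let the removed states be the targets of transitions as before, since targets outside $V_i$ are allowed. Thus in a maximizing configuration the default transitions form a simple cycle.

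\textbf{Step 3: at least $2\sigma$ transitions.} By the default cycle property each character has at least one transition out of $\zeta$. Suppose some character $c$ has exactly one transition on $\zeta$, say at state $w$. Then every other cycle state replaces $c$, so every default transition of $\zeta$ except the one leaving $w$ replaces $c$. I would show this forces the ratio down, or allows a modification that does not decrease $\beta$: adding a second transition for $c$ at a suitable state and correspondingly removing $c$ from the replaced set of one edge.

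The cleaner route is a counting argument. Let the cycle have length $\ell$ with $T$ transitions, so $|C^2_{\text{OPT}}| = T+\ell$, and let $r_e$ be the number replaced by edge $e$. Then $\sum_e r_e = \ell\sigma - T$ and $r \le (\ell\sigma - T)/\ell$, which gives
\[
\beta \le 1 + \frac{\sigma - T/\ell - 1}{T+\ell}.
\]
If $T<2\sigma$, some characters appear only once, and I would compare this bound with the configuration obtained by duplicating those characters at a second state (changing the DFA accordingly) and show the ratio does not decrease. So a maximizer can be taken with $T\ge 2\sigma$.

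The main obstacle is Step 3, and the instance-modification in Step 2: making sure the modified object is still a valid DFA whose optimum really is the modified component. That is needed because $\beta$ is maximized over input DFAs, not over abstract cycle configurations. I would first try to avoid modification entirely, deriving the $2\sigma$ bound directly from the inequality above by optimizing over $T$ and $\ell$. Modification would be the fallback only if that fails.
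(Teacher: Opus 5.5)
Your Steps 1 and 2 follow the paper's argument: an acyclic component has ratio $1$, and tree states add the same amount to $|C^2_{\text{OPT}}|$ and $|C^2_{\text{O2A}}|$, so deleting them raises the ratio. One correction in Step 2: the cycle stays optimal on the restricted instance because any cheaper equivalent configuration on the cycle states could be substituted back into $D^2_{\text{OPT}}$, since the tree states depend only on $\delta$. It is not that a better optimum ``only helps''.

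The genuine gap is Step 3, which is the real content of the lemma. You leave it as a plan, and neither route you sketch can work.
\begin{itemize}
\item The counting bound $\beta \le 1 + (\sigma - T/\ell - 1)/(T+\ell)$ is decreasing in $T$. Optimizing it over $T$ and $\ell$ therefore drives you to $T=\sigma$, where for $\ell=3$ the bound tends to $5/3$. The bound uses only the default cycle property, which cannot rule out characters occurring once, so no lower bound on $T$ can come out of it.
\item The duplication route goes in the wrong direction. Giving a second cycle state its own transition for $c$ raises $|C^2_{\text{OPT}}|$ by one and $|C^2_{\text{O2A}}|$ by at most one. Since $\beta>1$, this strictly lowers the ratio. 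Moreover, unless you change the DFA, the added transition is redundant, so the result is not even minimum.
\end{itemize}

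The missing idea, which the paper uses, is to \emph{delete} such a character rather than duplicate it. Suppose the cycle carries fewer than $2\sigma$ transitions. Pigeonhole plus the default cycle property gives a character $c$ with exactly one transition on the cycle, at some state $w$. Every other cycle state replaces $c$ and so reaches $w$ along the cycle. Hence all states of the component agree on $c$, and $c$ costs exactly one transition in $C^2_{\text{OPT}}$ and one in the acyclic version. Removing $c$ from $\Sigma$ keeps the default cycle property, since every other character still occurs on the cycle. It also gives ratio $(|C^2_{\text{O2A}}|-1)/(|C^2_{\text{OPT}}|-1)>\beta$, contradicting maximality.

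One detail to handle when you write this. If the broken edge leaves a state $a\neq w$, the reinserted transitions include one for $c$ at $a$. But then $w$'s own $c$-transition becomes redundant: $w$'s default path now ends at $a$, and $a$ agrees with $w$ on $c$. So the acyclic intermediate can be taken with a single $c$-transition, and deleting $c$ lowers both sizes by exactly one.
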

\begin{proof}
    If the DTG of $C^2_{\text{OPT}}$ contained no cycles, then $C^2_{\text{OPT}} = C^2_{\text{O2A}}$ and $\beta = 1$, which is not maximal. Therefore, the DTG of $C^2_{\text{OPT}}$ contains a cycle. Furthermore, since the DTG is a pseudo-forest, it contains exactly one cycle.
    
    Any default transition not part of the default cycle in $C^2_{\text{OPT}}$ is also present in $C^2_{\text{O2A}}$. Thus, removing the states of $D$ (and their outgoing transitions) that are not part of the default cycle in the DTG of $C^2_{\text{OPT}}$, subtracts an equal amount from $|C^2_{\text{OPT}}|$ and $|C^2_{\text{O2A}}|$. This increases the component ratio, contradicting the maximality of $\beta$. Consequently, all default transitions in $C^2_{\text{OPT}}$ are part the default cycle, implying that its DTG is a simple cycle.

    We now show that $C^2_{\text{OPT}}$ has at least $2\sigma$ transitions.
    Suppose for contradiction $C^2_{\text{OPT}}$ contains fewer than $2\sigma$ transitions. By the pigeonhole principle and the default cycle property, at least one character $c \in \Sigma$ appears as the label of a transition exactly once in $C^2_{\text{OPT}}$. Consequently, all states in $C^2_{\text{OPT}}$ must agree on character $c$, which implies that $C^2_{\text{O2A}}$ also contains only one transition labeled $c$. Removing this label from $\Sigma$ and corresponding transition yields a component with ratio $\frac{|C^2_{\text{O2A}}| - 1}{|C^2_{\text{OPT}}| - 1} > \beta$, contradicting the maximality of $\beta$. Hence, $C^2_{\text{OPT}}$ contains at least $2\sigma$ outgoing transitions.
\end{proof}
We now derive the upper bound on the worst-case component ratio $\beta$.
\begin{lemma}\label{lma:componentRatio}
    The worst-case component ratio is upper bounded by
    \[
        \frac{n' + \sigma\left(3 - \frac{2}{n'}\right) - 1}{n' + 2\sigma}\; ,
    \]
    where $n'$ is the number of states in $C^2_{\text{OPT}}$.
\end{lemma}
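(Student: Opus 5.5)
Let $C^2_{\text{OPT}}$ be the worst-case component. By \Cref{lma:worstCaseStructure} its default transitions form a single simple cycle through all of its $n'$ states, and it has $T \ge 2\sigma$ ordinary transitions. Hence $|C^2_{\text{OPT}}| = T + n'$. The plan is to bound $|C^2_{\text{O2A}}|$ in terms of $T$, $n'$ and $\sigma$ by an averaging argument, and then maximize the resulting ratio over the admissible values $T \ge 2\sigma$.

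\textbf{Counting replaced transitions.} For a state $v$ on the cycle, let $t_v$ be the number of its outgoing transitions in $C^2_{\text{OPT}}$. Since $D^2_{\text{OPT}}$ is equivalent to the complete DFA $D$, the default transition out of $v$ replaces exactly $\sigma - t_v$ transitions, one for each character missing at $v$. Summing over the cycle, the $n'$ default transitions together replace $\sum_v(\sigma - t_v) = n'\sigma - T$ transitions. The construction of $D^2_{\text{O2A}}$ removes the cycle's default transition replacing the fewest transitions. This minimum is at most the average $(n'\sigma - T)/n' = \sigma - T/n'$.

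\textbf{Bounding the acyclic component.} Removing one default transition and reinserting its replaced transitions gives
\[
|C^2_{\text{O2A}}| \le T + (n'-1) + \sigma - \frac{T}{n'} .
\]
We must also check that the other states' transition functions are unchanged. This holds because the path from any other state now ends at the state whose default was removed, and that state is now complete, so the same targets are reached. It follows that
\[
\beta \le f(T) := \frac{T\left(1 - \frac{1}{n'}\right) + n' - 1 + \sigma}{T + n'} .
\]

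\textbf{Maximizing over $T$.} The function $f$ has the form $(aT+b)/(T+n')$ with $a = 1 - 1/n'$ and $b = n' - 1 + \sigma$. Such a function is nonincreasing in $T$ exactly when $a n' \le b$, that is, when $n' - 1 \le n' - 1 + \sigma$, which always holds. Therefore the maximum over $T \ge 2\sigma$ is attained at $T = 2\sigma$, giving
\[
\frac{2\sigma - \frac{2\sigma}{n'} + n' - 1 + \sigma}{n' + 2\sigma} = \frac{n' + \sigma\left(3 - \frac{2}{n'}\right) - 1}{n' + 2\sigma} ,
\]
which is the claimed bound.

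\textbf{Expected obstacle.} The main point needing care is the averaging step: we must justify that each default transition on the cycle replaces exactly $\sigma - t_v$ transitions, so that these counts sum to $n'\sigma - T$. A second point is that the monotonicity argument is what allows us to use only the lower bound $T \ge 2\sigma$ from \Cref{lma:worstCaseStructure}, with no further information about $T$.
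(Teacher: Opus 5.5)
Your proof is correct and follows the paper's argument. The cycle's default transitions replace $n'\sigma - T$ transitions in total, the one removed when forming $C^2_{\text{O2A}}$ replaces at most the average, and $T \ge 2\sigma$ gives the bound. The only difference is the last step: you keep the bound on $x$ as a function of $T$ and check that $f(T)$ is decreasing, while the paper bounds $x \le \sigma(1-2/n')$ and $|C^2_{\text{OPT}}| \ge n'+2\sigma$ separately, which tacitly needs $x \ge 1$ (true by minimality), so your version is slightly more self-contained.
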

\begin{proof}
    By \Cref{lma:worstCaseStructure}, $C^2_{\text{OPT}}$ has at least $2\sigma$ transitions out of the $n'\sigma$ transitions in $D$. Its default transitions therefore replace at most $n'\sigma - 2\sigma = \sigma(n' - 2)$ transitions in total.
    
    Let $x$ be the minimum number of transitions replaced by any default transition in $C^2_{\text{OPT}}$. It follows that $|C^2_{\text{O2A}}| = |C^2_{\text{OPT}}| + x - 1$. The minimum is no larger than the average, yielding:
    \[
        x \le \frac{\sigma(n' - 2)}{n'} = \sigma\left(1-\frac{2}{n'}\right)\; .
    \]
    Since $|C^2_{\text{OPT}}| \ge n' + 2\sigma$ (\Cref{lma:worstCaseStructure}), substituting $x$ in the component ratio yields
    \[
        \beta = \frac{|C^2_{\text{OPT}}| + x - 1}{|C^2_{\text{OPT}}|} 
                \le \frac{n' + 2\sigma + \sigma(1-\frac{2}{n'}) - 1}{n' + 2\sigma} 
                = \frac{n' + \sigma\left(3 - \frac{2}{n'}\right) - 1}{n' + 2\sigma}\; . \qedhere
    \]
\end{proof}

Finally, using the upper bound on $\beta$, we bound the approximation ratio of the MSF compression.
\begin{lemma}\label{lma:apxUpperBound}
    The approximation ratio of the MSF compression is $3/2$. 
\end{lemma}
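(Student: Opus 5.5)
We chain the comparisons already set up in this section. Since $D^2_{\text{O2A}}$ is an acyclic \DdFA equivalent to $D$, and the MSF compression is optimal among all forest-restricted \DdFA{s}, we have $|D^2_{\text{MSF}}| \le |D^2_{\text{O2A}}|$. By \eqref{eq:betaGeO2A}, $|D^2_{\text{O2A}}| \le \beta \,|D^2_{\text{OPT}}|$. It therefore suffices to show $\beta \le 3/2$.

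We apply \Cref{lma:componentRatio}. It suffices to show, for all integers $n' \ge 3$ and $\sigma \ge 1$, that
\[
    \frac{n' + \sigma\left(3 - \frac{2}{n'}\right) - 1}{n' + 2\sigma} \le \frac{3}{2}\; .
\]
The restriction $n' \ge 3$ is justified by \Cref{lma:worstCaseStructure} together with \Cref{lma:3-cycles}: the DTG of $C^2_{\text{OPT}}$ is a simple default cycle of a minimum \DdFA, so it has length at least $3$. Cross-multiplying, the inequality becomes
\[
    2n' + 6\sigma - \frac{4\sigma}{n'} - 2 \le 3n' + 6\sigma,
\]
that is,
\[
    0 \le n' + 2 + \frac{4\sigma}{n'}.
\]
This holds trivially. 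Hence $\beta < 3/2$, and in fact the bound is strict for every fixed $n'$.

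The direct consequence is $|D^2_{\text{MSF}}| \le \tfrac32 |D^2_{\text{OPT}}|$, which is the claimed upper bound. The tightness part, where the ratio approaches $3/2$ as $\sigma \to \infty$ with $n'$ large, is handled separately in the paper's lower-bound construction. If the lemma also asks for it, we would note that the ratio tends to $3/2$ when $\sigma \gg n' \to \infty$, matching the construction.

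The only subtle step is the reduction from a general $D$ to the worst-case component. One must check that $\beta$ is attained by a component with the structure from \Cref{lma:worstCaseStructure}. That lemma assumes $\beta$ is maximized over all inputs, so we argue as follows. For any input, either every component is acyclic, giving ratio $1$, or we repeatedly apply the structural reductions from that lemma's proof: delete states outside the cycle, and delete characters appearing only once. Each reduction only increases the ratio, so it suffices to bound configurations satisfying the lemma's conclusion. That bound is exactly the one computed above.
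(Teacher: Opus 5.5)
Your main chain is the paper's: $\alpha\le|D^2_{\text{O2A}}|/|D^2_{\text{OPT}}|\le\beta$ by \eqref{eq:betaGeO2A}, then \Cref{lma:componentRatio}, then an elementary inequality. You check that inequality by cross-multiplying, where the paper rewrites the bound as $1+\frac{\sigma(1-2/n')-1}{n'+2\sigma}\le 1+\frac{\sigma}{n'+2\sigma}\le\frac32$; your computation also shows that $n'\ge 3$ is not needed. Replacing the appeal to a maximizing instance by monotone reductions is a sound instinct, since the supremum $3/2$ is approached but never attained, and deleting tree states is fine.

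The gap is in the second reduction, borrowed from \Cref{lma:worstCaseStructure}, and it affects the paper's proof of that lemma in the same way. Deleting a character $c$ that labels exactly one transition on the cycle need not increase the ratio for $D^2_{\text{O2A}}$ as defined. Suppose that transition sits at $u$ and the cycle is broken at the default transition leaving some $a\neq u$. Then $a$ lacks $c$, so a $c$-transition is reinserted at $a$. Deleting $c$ therefore removes one transition from $C^2_{\text{OPT}}$ but two from $C^2_{\text{O2A}}$, which lowers any ratio below $2$.

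This is not a technicality. Take three states with $a$ characters (where $3\mid a$) on which all three agree, and, for each pair of states, two characters on which exactly that pair agrees; so $\sigma=a+6$. The $3$-cycle of default transitions is optimal: it has size $a+15$, against at least $a+16$ for any forest. It has only $a+12<2\sigma$ transitions. If the $a$ common characters are spread evenly over its states, every default transition replaces $2a/3+2$ transitions. The literal $D^2_{\text{O2A}}$ then has size $5a/3+16$, so its component ratio tends to $5/3$. Hence $\beta\le 3/2$, which your chain needs, is false for the construction as stated (the MSF compression itself is fine here).

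The repair is to build the acyclic comparison \DdFA more carefully. When you break the cycle at $a$, besides completing $a$, delete the transition at $u\neq a$ of each character $c$ occurring exactly once on the cycle. The result is still equivalent, because all cycle states agree on $c$ and every cycle state's default path now ends at $a$.

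Let $n'$ be the cycle length, and let $\sigma_1,\sigma_2$ be the numbers of characters occurring once, respectively at least twice, on the cycle. Breaking at $a$ now costs $y_a-1$, where $y_a$ is the number of characters of the second kind that $a$ lacks. Averaging gives $\min_a y_a\le\sigma_2(1-2/n')$, while $|C^2_{\text{OPT}}|\ge n'+\sigma_1+2\sigma_2$. So every component, with or without attached trees, has ratio at most $1+\frac{\sigma_2(1-2/n')}{n'+\sigma_1+2\sigma_2}<\frac32$. This needs neither an extremal instance nor any deletion of states or characters. Since the modified \DdFA is acyclic, the MSF compression is no larger, and your chain then goes through.
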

\begin{proof}
    Because $D^2_{\text{O2A}}$ is acyclic and the MSF compression is optimal over all acyclic \DdFA{s}~\cite{IntroOfD2FA}, we have 
    \[
        \alpha = \frac{|D^2_{\text{MSF}}|}{|D^2_{\text{OPT}}|} \le \frac{|D^2_{\text{O2A}}|}{|D^2_{\text{OPT}}|} \le \beta\; ,
    \]
    where the last inequality comes from \Cref{eq:betaGeO2A}. Using \Cref{lma:componentRatio} we get
    \[
        \alpha  \le \beta
                \le \frac{n' + \sigma\left(3 - \frac{2}{n'}\right) - 1}{n' + 2\sigma}
                = 1 + \frac{\sigma\left(1 - \frac{2}{n'}\right) - 1}{n' + 2\sigma} 
                \le 1 + \frac{\sigma}{n' + 2\sigma} 
                \le \frac{3}{2}\; . \qedhere
    \]
\end{proof}

% ######### LOWER BOUND ##############
\paragraph*{Lower bound on the approximation ratio}
We establish a matching lower bound on the approximation ratio $\alpha$ of the MSF compression. We first introduce a family of DFAs for which an optimal \DdFA is known, and then show that this family can be used to obtain an approximation ratio arbitrarily close to $3/2$.

\begin{figure}[t]
    \centering
    \includegraphics[clip, trim=4.65cm 12.5cm 4.55cm 2.8cm, width=0.45\textwidth]{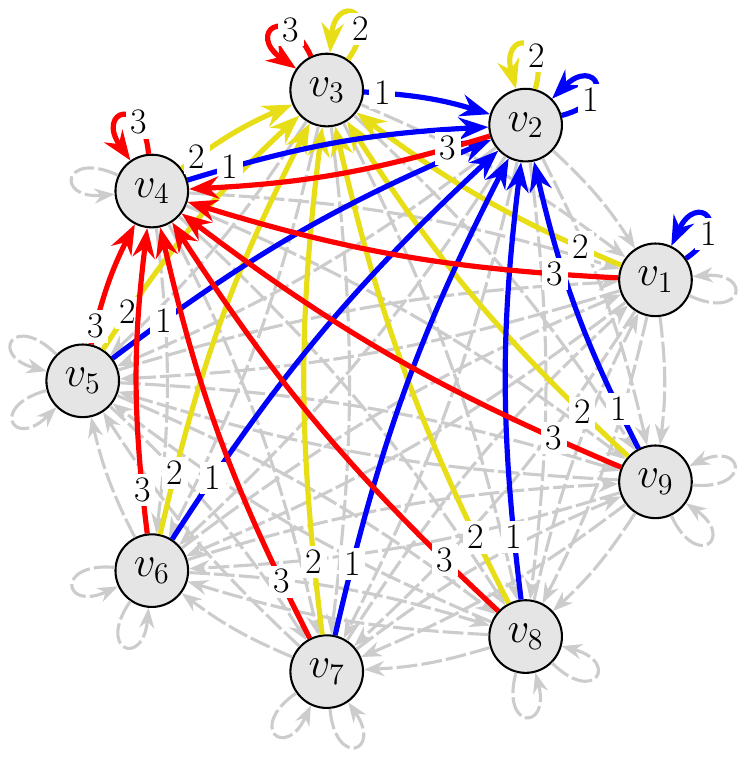}
    \hfill
    \includegraphics[clip, trim=4.65cm 12.5cm 4.55cm 2.8cm, width=0.45\textwidth]{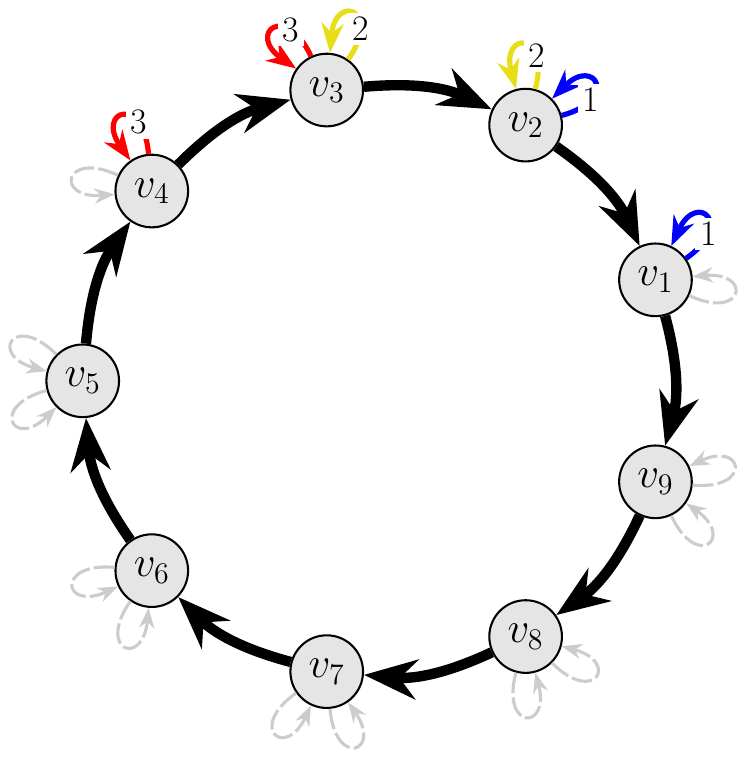}
    \caption{An example of our lower bound construction for $n=9$ and the corresponding minimum equivalent \DdFA. The sets of characters $\Sigma_1, \Sigma_2, \Sigma_3$ have their corresponding transitions labeled by $1,2,3$, and are colored blue, yellow, and red, respectively. The remaining transitions are gray and dashed.}
    \label{fig:ApxLowerBound}
\end{figure}

\begin{lemma}\label{lma:apxLowerBound}
    For each $n > 3$ and $k\in\mathbb{N}$, there is a DFA with $n$ states and an alphabet of size $\sigma=kn$, where the MSF compression is larger than the minimum equivalent \CursiveDdFA by a factor of 
    \[
    \frac{n + \sigma\left(3 - \frac{2}{n}\right) - 1}{n + 2\sigma}\; .
    \]
\end{lemma}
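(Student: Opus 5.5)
We will realize the bound of \Cref{lma:componentRatio} exactly. Set $x=\sigma(1-\frac{2}{n})=k(n-2)$. The idea is a DFA on $n$ states whose minimum \DdFA is a single default cycle through all states, keeping exactly $2k$ transitions per state. We also want no pair of states to agree on more than $x$ characters. Then every default transition saves at most $x-1$, the cycle attains $n(x-1)$, and any forest attains at most $(n-1)(x-1)$.

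\emph{Construction.} Take states $v_0,\dots,v_{n-1}$, with indices mod $n$. Partition $\Sigma$ into $n$ groups $\Sigma_0,\dots,\Sigma_{n-1}$ of size $k$, and fix $h=\lfloor n/2\rfloor$. For $c\in\Sigma_j$, split the cyclic order of the states into two arcs:
\[
A_j=\{v_{j+1},\dots,v_{j+h}\},\qquad B_j=\{v_{j+h+1},\dots,v_{j}\}.
\]
Every state of $A_j$ gets the $c$-transition to some fixed state $a_j$, and every state of $B_j$ gets the $c$-transition to a different fixed state $b_j$, say $a_j=v_j$ and $b_j=v_{j+1}$.

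\emph{Upper bound on OPT.} Add the default cycle $v_i\to v_{i+1}$ for all $i$. For $c\in\Sigma_j$, state $v_i$ must keep its $c$-transition exactly when $v_{i+1}$ lies in the other arc, that is, for $v_i\in\{v_{j+h},v_j\}$. So the kept transitions are:
\begin{itemize}
\item $v_i$ keeps $\Sigma_i$, because it is the last state of $B_i$;
\item $v_i$ keeps $\Sigma_{i-h}$, because it is the last state of $A_{i-h}$.
\end{itemize}
Hence each state keeps exactly $2k$ transitions. Every character has one kept transition in each arc, so the default cycle property holds and the transition function is unchanged. The size is $n+2\sigma$.

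\emph{Agreement bound.} Two states $v_i,v_l$ agree on $c\in\Sigma_j$ iff they lie in the same arc, since $a_j\ne b_j$. The arcs of $\Sigma_j$ are cut at the two positions $(v_{j+h},v_{j+h+1})$ and $(v_j,v_{j+1})$. As $j$ varies, these cut positions run over each cyclic edge exactly twice. The two states are separated for $\Sigma_j$ iff exactly one cut lies on the arc from $v_i$ to $v_l$. For distance $d\le n/2$, this happens for $2d\ge 2$ groups when $h$ is chosen suitably. This is the step I expect to need care: the parity of $n$, the choice of $h$, and checking that no $\Sigma_j$ has both cuts on the short side. Granting it, any two distinct states disagree on at least $2k$ characters, hence agree on at most $x$.

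\emph{Conclusion.} Any \DdFA has at most $n$ default transitions, each saving at most $x-1$, so OPT has size exactly $n\sigma-n(x-1)=n+2\sigma$.

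A forest has at most $n-1$ edges. Since the MSF compression is optimal over forests~\cite{IntroOfD2FA}, its size is at least
\[
n\sigma-(n-1)(x-1)=n+2\sigma+x-1.
\]
This is attained by the path obtained from the cycle by deleting one default transition. Dividing by $n+2\sigma$ and substituting $x=\sigma(1-\frac{2}{n})$ gives exactly the claimed factor.
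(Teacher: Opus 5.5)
Your proof is correct, and the step you flagged does go through with $h=\lfloor n/2\rfloor$. The two cuts of $\Sigma_j$, at the edges $(v_j,v_{j+1})$ and $(v_{j+h},v_{j+h+1})$, are at cyclic distance $\min(h,n-h)=h$. Any two of the $d\le\lfloor n/2\rfloor=h$ consecutive edges on the short arc between $v_i$ and $v_l$ are at cyclic distance at most $d-1<h$. So no group has both cuts on the short arc, and exactly $2d\ge 2$ groups separate the two states. They therefore agree on $\sigma-2dk\le x$ characters.

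The skeleton of your argument is the same as the paper's: a Hamiltonian default cycle saving $x-1$ per state, against at most $n-1$ default transitions in a forest. What differs is the DFA family. For each $\Sigma_i$, the paper splits the states into $\{v_i\}$, which self-loops, and $V\setminus\{v_i\}$, which all go to $v_{i+1}$. Up to relabeling the groups and the choice of the two targets, this is your family with $h=1$. There every pair of distinct states is separated by exactly two groups, so all pairs agree on exactly $(n-2)k$ characters. With this uniform space reduction graph, any Hamiltonian default cycle is optimal, any spanning tree gives an MSF compression, and no counting is needed.

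Your balanced arcs make agreement decay with cyclic distance, as $\sigma-2dk$. This forces the optimal cycle to follow the cyclic order and costs you the cut-counting step. For this statement it gains nothing, beyond showing that the tight example does not depend on uniform pairwise agreement.

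Your write-up does one thing the paper's proof leaves implicit: it states why the cycle is optimal, namely that each state has at most one default transition, saving at most $x-1$. The same one-line argument is what justifies the paper's construction too.
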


\begin{proof}
    Let $D$ be a DFA with $n$ states and an alphabet of size $\sigma=kn$. Partition the alphabet $\Sigma$ into $n$ sets of characters, $\Sigma_1, \dots, \Sigma_n$, each of size $k$. For each $i\in\{1,\dots, n\}$, transitions labeled by the characters from $\Sigma_i$ go to state $v_{i+1}$ (with index modulo $n$) from every state except $v_i$, where the transitions are self-loops. 
    
    The size of the DFA is $|D|=n\sigma$. Note that any two states $v_i$ and $v_j$ agree on exactly $(n-2)\sigma/n$ characters, namely all characters except those in the sets $\Sigma_i$ or $\Sigma_j$. 
    
    We construct an optimal \DdFA $D^2_{\text{OPT}}$ by adding a default transition from each $v_i$ to $v_{i-1}$ (where $v_1$ has a default transition to $v_n$). The DFA and the optimal \DdFA is visualized in \Cref{fig:ApxLowerBound}. Each default transition replaces $(n-2)\sigma/n$ transitions so the total size of the \DdFA is $|D^2_{\text{OPT}}| = |D| - n(n - 2)\sigma/n + n
                = n\sigma-(n-2)\sigma+n
                =2\sigma+n$.
    %\begin{align*}
       % %|D^2_{\text{OPT}}| &= |D| - n(n - 2)\sigma/n + n\\
          %      &= %n\sigma-(n-2)\sigma+n\\
                %&=2\sigma+n\; .
   % \end{align*}
   %
    As there are no characters that all states agree on, this \DdFA possesses the default cycle property. 
    
    Now consider the MSF compression, $D_\text{MSF}^2$. As the default transitions of this \DdFA do not create cycles, it has at most, and thus exactly, $n-1$ default transitions, each replacing $(n-2)\sigma/n$ transitions. The size of this \DdFA is
    \begin{align*}
        |D^2_{\text{MSF}}| &= |D^2_{\text{OPT}}| + (n-2)\sigma/n-1\\
                    &=n+2\sigma+(n-2)\sigma/n-1\\
                    &=n + \sigma\left(3 - \frac{2}{n}\right) - 1\; .
    \end{align*}
    Taking the ratio of their sizes gives the desired bound:
    \[
        \frac{|D^2_{\text{MSF}}|}{|D^2_{\text{OPT}}|} = \frac{n + \sigma\left(3 - \frac{2}{n}\right) - 1}{2\sigma+n}\; .\qedhere
    \]
\end{proof}

We now show that the family of DFAs from \Cref{lma:apxLowerBound} contains DFAs for which the MSF compression has an approximation ratio arbitrarily close to $3/2$.

\begin{lemma}\label{lma:apxLowerBoundTight}
    For each $\epsilon > 0$, there exists an instance of the \CursiveDdFA compression problem where the MSF compression has an approximation ratio of at least $3/2 - \epsilon$.
\end{lemma}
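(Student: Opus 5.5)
We plan to use the family from \Cref{lma:apxLowerBound} and show that its ratio tends to $3/2$ when $n$ and $k$ grow suitably. For $n>3$ and $k\in\mathbb{N}$ with $\sigma = kn$, that lemma gives a DFA whose MSF compression exceeds the minimum equivalent \DdFA by the factor
\[
    r(n,k) = \frac{n + \sigma\left(3 - \frac{2}{n}\right) - 1}{n + 2\sigma}.
\]
Since this is an exact ratio of sizes, it suffices to show that $r(n,k)$ can be made at least $3/2 - \epsilon$.

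First we rewrite the numerator as $\frac{3}{2}(n+2\sigma) - \frac{n}{2} - \frac{2\sigma}{n} - 1$, which gives
\[
    r(n,k) = \frac{3}{2} - \frac{\frac{n}{2} + \frac{2\sigma}{n} + 1}{n + 2\sigma}.
\]
Substituting $\sigma = kn$ makes the error term
\[
    \frac{\frac{n}{2} + 2k + 1}{n + 2kn}.
\]
We bound its three pieces separately:
\[
    \frac{n/2}{n(1+2k)} = \frac{1}{2(1+2k)}, \qquad \frac{2k}{n(1+2k)} \le \frac{1}{n}, \qquad \frac{1}{n(1+2k)} \le \frac{1}{n}.
\]
So the error is at most $\frac{1}{2(1+2k)} + \frac{2}{n}$.

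Given $\epsilon > 0$, choose $k$ with $\frac{1}{2(1+2k)} \le \epsilon/2$. Then choose $n > 3$ with $\frac{2}{n} \le \epsilon/2$. For these choices \Cref{lma:apxLowerBound} yields an instance with ratio at least $3/2 - \epsilon$.

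The hard part is not this calculation but the validity of \Cref{lma:apxLowerBound}, which we may assume. The step to handle carefully is the order of limits. Both $k$ and $n$ must grow, because $n \to \infty$ with $k$ fixed gives only $3/2 - \frac{1}{2(1+2k)}$. Choosing $k$ first and then $n$ handles this.
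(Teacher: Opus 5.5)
Your proposal is correct and follows the paper's proof: plug $\sigma = kn$ into the ratio from \Cref{lma:apxLowerBound}, rewrite it as $3/2$ minus an error term, and take $k$ and $n$ large. The only difference is cosmetic. The paper observes that your last two pieces sum exactly to $\frac{2k+1}{n(2k+1)} = \frac{1}{n}$, which gives the exact value $\frac{3}{2} - \frac{1}{4k+2} - \frac{1}{n}$. You instead bound those pieces by $\frac{2}{n}$, which is looser but still sufficient.
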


\begin{proof}
    Consider the DFA from \Cref{lma:apxLowerBound} with $n$ states and an alphabet of size $\sigma=kn$. The MSF compression of this DFA is larger than the optimal \DdFA by a factor of 
    \[
        \frac{n+kn(3-\frac{2}{n})-1}{n+2kn} 
        = \frac{(3kn+\frac{3}{2}n)-\frac{1}{2}n-(2k+1)}{2kn+n}
        = \frac{3}{2}-\frac{1}{4k+2}-\frac{1}{n}\; .
    \]
    
    For any given $\epsilon > 0$, we choose $k$ and $n$ sufficiently large so that $\frac{1}{4k + 2} + \frac{1}{n} \le \epsilon$, completing the proof.
\end{proof}

\Cref{lma:apxUpperBound} together with \Cref{lma:apxLowerBoundTight} proves \Cref{thm:apxRatio}.

\paragraph*{Disclosure of AI Usage}
Generative AI tools (specifically, free versions of ChatGPT/Gemini) were used during the preparation of this paper solely to edit language, fix grammar, and improve readability. A GPT-5.6 sol model was used to help find the example used in \Cref{fig:allowingcyclesinDFA}. All ideas, proofs, and technical results were developed entirely by the authors. The authors have checked and verified all text and take full responsibility for the contents of this paper.

\newpage
\bibliographystyle{abbrv}
\bibliography{bibliography}
\end{document}